\DeclareMathOperator{\id}{id}
\DeclareMathOperator{\Tr}{Tr}
\DeclareMathOperator{\Span}{Span}
\DeclareMathOperator{\im}{Im}
\DeclareMathOperator{\TOL}{\overleftarrow{\mathcal{T}}}
\DeclareMathOperator{\TOR}{\overrightarrow{\mathcal{T}}}
\newcommand{\identity}{\mathds{1}}
\newcommand{\HH}{\mathcal{H}}  % (general) Hilbert space
\newcommand{\HA}{\HH_{\textsc{a}}}
\newcommand{\HB}{\HH_{\textsc{b}}}
\newcommand{\HAB}{\HA\otimes\HB}
\newcommand{\rhoB}{\rho_{\textsc{b}}}
\newcommand{\TrB}{\Tr_{\textsc{b}}}
\newcommand{\rmd}{{d}} 
\newcommand{\rmT}{\mathrm{T}} 
\newcommand{\LV}[1][]{\mathcal{L}_{#1}}
\newcommand{\LF}[1][]{\mathcal{L}_{F^{\xi}_{#1}}}
\newcommand{\Prj}{\mathcal{P}}
\newcommand{\PrjC}{\mathcal{Q}}
\newcommand{\KO}{\mathcal{K}}  %Koopman operator
\newcommand{\EO}{\mathcal{E}}  %Evolution operator in $\mathfrak{A}$
\newcommand{\PD}[1]{{{#1}_{\ast}}}   %Predual
\newcommand{\BD}[1]{{#1^{\ast}}}   %Banach dual
\newcommand{\BDD}[1]{{#1^{\ast\ast}}}   %Banach double dual
\newcommand{\tikzimageswap}[2]{#1} % Arg #1 is the filename, arg #2 is the tikzpicture 
\newtheorem{theorem}{Theorem}
\newtheorem{lemma}{Lemma}
\newtheorem{corollary}{Corollary}
\theoremstyle{definition}
\newtheorem{definition}{Definition}
\newtheorem{example}{Example}
\newtheorem{remark}{Remark}
\begin{document}
\title{Duality and Conditional Expectations in the Nakajima-Mori-Zwanzig Formulation}
\date{ \today}
\author{Jason M. Dominy}
\email{jdominy@ucsc.edu}
\author{Daniele Venturi}
\email{venturi@ucsc.edu}
\affiliation{Department of Applied Mathematics and Statistics,
University of California, Santa Cruz}
% \thanks{\emph{E-mail addresses}: {\tt jdominy@ucsc.edu}, {\tt venturi@ucsc.edu}}

\newcommand{\jmdstack}[2]{\genfrac{}{}{0pt}{}{#1}{#2}}
\newcommand{\sectionline}{
  \nointerlineskip \vspace{\baselineskip}
  \hspace{\fill}\rule{0.5\linewidth}{.7pt}\hspace{\fill}
  \par\nointerlineskip \vspace{\baselineskip}
}
\newcommand{\red}[1]{\textcolor{red}{#1}}
\newcommand{\ma}[1]{\textcolor{magenta}{#1}}
\newcommand{\blue}[1]{\textcolor{blue}{#1}}
\newcommand{\purp}[1]{\textcolor{violet}{#1}}
\newcommand{\green}[1]{\textcolor{green}{#1}}
\newcommand{\ket}[1]{\left| #1 \right\rangle}
\newcommand{\bra}[1]{\left \langle #1 \right |}
\newcommand{\braket}[2]{\langle #1 | #2 \rangle}
\newcommand{\ketbra}[2]{| #1 \rangle\! \langle #2 |}

\begin{abstract}
We develop a new operator algebraic formulation of the 
Nakajima-Mori-Zwanzig (NMZ) method of projections. The new theory 
is built upon rigorous mathematical foundations, and it can be 
applied to both classical and quantum systems. We show that 
a duality principle between the NMZ formulation in the space of 
observables and in the state space can be established, analogous 
to the Heisenberg and Schr\"odinger pictures in quantum mechanics. 
Based on this duality we prove that, under natural assumptions, 
the projection operators appearing in the NMZ equation must be 
conditional expectations. The proposed formulation is illustrated 
in various examples.
\end{abstract}
\maketitle

\section{Introduction}
High-dimensional stochastic dynamical systems 
arise in many areas of mathematics, natural sciences and 
engineering. Whether it is a physical system being 
studied in a lab, or an equation being solved on a computer, 
the full microscopic state of the system as a point evolving in 
some phase space is often intractable to handle in all its complexity. 

Instead, it is often desirable to attempt to reduce the complexity of the theoretical description by passing from a model of the dynamics of the
full system to a model only of the observables of interest. 
Such observables may be chosen, for example, 
because they represent global macroscopic features 
of the bulk system, as in the derivation of the Boltzmann 
equation of nonequilibrium thermodynamics from microscopic 
descriptions \cite{Villani2002,Snook2006,Cercignani2012}, 
or in the derivation of the dynamics of commutative subalgebras 
of observables in quantum mechanics \cite{Kolomietz2010}.  
The observables may also represent features localized on a subsystem 
of interest, as in the Brownian motion of a particle in a liquid, 
where the master equation governing the position and momentum 
of the particle is derived from first principles 
(Hamiltonian equations of motion of the full system), 
by eliminating the degrees of freedom associated with the 
surrounding liquid \cite{Kampen1986,Chaturvedi1979}.  
In the context of numerical approximation of stochastic 
partial differential equations (SPDEs), the observables may 
be chosen to define a finite-dimensional approximation of the phase space, 
for example a finite set of Fourier-Galerkin 
coefficients \cite{Edwards1964, Herring1966, Montgomery1976}.
Whatever the reason behind this reduction of the set of 
observables, it is often desirable to then attempt to reduce the 
complexity of the theoretical description by passing from a model 
of the dynamics of the full system to a model only of the 
observables of interest. 
For example, we might have a high-dimensional dynamical 
system evolving as $dx/dt = F(x)$, but we are only interested in a 
relatively small number of $\mathbb{C}$-valued observable 
functions $g_{1}(x),\dots,g_{m}(x)$.  The dynamics of this 
lower-dimensional set of observable quantities may be simpler than 
that of the entire system, although the underlying law 
by which such quantities evolve in time is often quite  
complex. Nevertheless, approximation of such law can 
in many cases allow us to avoid performing simulation of 
the full system and solve directly for the quantities of interest.  
If the resulting equation for $\{g_{i}(x)\}$ is low dimensional 
and computable, this provides a means of avoiding the curse 
of dimensionality.

In this paper we study one family of techniques for performing 
such dimensional reduction, namely the Nakajima-Mori-Zwanzig (NMZ) 
method of projections \cite{Nakajima1958,Mori1965,Zwanzig1960,Zwanzig1961}
(see also \cite{Chorin2000, Venturi2016,Venturi2014}).
To this end, we place the NMZ formulation in the context of $C^{*}$-algebras 
of observables, and in so doing, set rigorous foundations 
of this important and widely used technique. More importantly, the operator 
algebraic setting we propose unifies classical and quantum mechanical
formulations.
The method of projections derives its name from the use of a 
projection map from the algebra of observables of the full system, 
to the subalgebra of interest.  In this algebraic context, 
it will naturally emerge that the two common flavors of 
NMZ -- for ``phase space functions'' and for probability density 
functions (PDFs) -- are dual equations for observables and states, 
directly corresponding to the dual Schr\"odinger and Heisenberg 
pictures of quantum mechanics.  Reasoning about information in these 
algebras and desiderata of the NMZ projection will reveal that the 
projection must be a conditional expectation in the operator 
algebraic sense.

The paper is organized as follows.  
We begin in Section \ref{sec:Background} with a quick 
review of $C^{*}$-algebras, their states and homomorphisms, 
as well as the relationship between topological spaces 
and algebras of functions.  We then discuss the relationship 
between classical dynamical systems and observable algebras 
in Section \ref{sec:CompositionTransferOps}, 
deriving from the nonlinear dynamical system the equivalent 
linear dynamics on the observable algebra. 
In Section \ref{sec:GeneralFramework}, the NMZ equation 
is introduced for the reduced dynamics on an observable algebra, 
along with the dual NMZ equation on the states of the algebra.  
We then look more closely at the NMZ projection operator 
in Section \ref{sec:ConditionalExpectations}, finding that, 
under natural assumptions, the projection operator must 
be a conditional expectation.  
While the elements of $C^{*}$-algebras are bounded 
observables, it is common to consider also unbounded observables 
(such as momentum); the incorporation of such affiliated observables 
into the NMZ framework is considered in Section \ref{sec:AffiliatedOperators}. 
In Section \ref{sec:PushForwardState}, we consider the problem 
of ``pushing'' the dynamics from one space to 
another (typically lower dimensional) space using NMZ and 
discuss the application of NMZ to quantum open systems. 
In Section \ref{sec:AnalyticExample} two simple examples
of the NMZ method are carried out analytically.  
Finally, the main results are summarized in Section \ref{sec:Summary}. 
We also include two brief Appendices, in which we discuss technical questions 
related to non-degenerate homeomorphisms and state-preserving maps.

\section{Background}
\label{sec:Background}
In this section we provide a quick review of $C^{*}$-algebras, 
their states and homomorphisms, and relationship between topological 
spaces and algebras of functions. The material in this section is 
well-known and can be found throughout the literature on operator 
algebras and algebraic dynamics.  Standard references for much of this 
material include, e.g., \cite{Kadison1997, Takesaki2002, Bratteli2003,Blackadar2006}.

\subsection{\texorpdfstring{$C^{*}$}{C*}-Algebras of Observables}
% \label{sec:OrderAndStates}
We are interested in developing the NMZ formalism simultaneously 
for dynamical systems
\begin{equation}
\dot{x} = F(x,\xi,t)
\label{eqn:nonautonODE}
\end{equation}
evolving on a (sufficiently) smooth manifold $\mathcal{M}$, 
where $\xi$ represents parameters drawn (perhaps randomly) 
from some parameter space $\Xi$, as well as for quantum 
mechanical systems. By ``manifold'' here we mean any of a 
large class of spaces on which \eqref{eqn:nonautonODE} 
makes sense, including, at a minimum, finite-dimensional manifolds, 
Banach spaces, and more general Banach manifolds. 
In particular, the simple form of \eqref{eqn:nonautonODE} 
can represent many different kinds of initial value problem, 
including ODEs, PDEs, and functional differential equations \cite{Venturi2016a}.  Thinking first of the classical dynamical system 
above, and of $\mathcal{M}$ as a generalized phase space of 
the system, a classical observable will typically 
be a $\mathbb{C}$-valued function on $\mathcal{M}$. 
There are different possible choices for the set of such functions, 
but certain properties may be desirable \cite{Segal1947a, Strocchi2008}. 
For example, if we can observe $f$ and $g$, then we should be able 
to observe $\alpha f + \beta g$ for any $\alpha, \beta\in\mathbb{C}$. 
We should also expect to be able to observe the product $fg$.  
In this way, we should expect the space of observables to form an 
algebra of functions under pointwise addition and multiplication.  
More careful and detailed reasoning \cite{Strocchi2008} about arbitrary 
physical systems (be they classical or quantum) leads to the 
conclusion that the set of observables for any physical system can 
be represented as a $C^{*}$-algebra.  That algebra will generally 
be commutative in the classical case, and noncommutative in the 
quantum case.

A Banach algebra is an algebra $\mathfrak{A}$ over $\mathbb{C}$ with a 
norm making $\mathfrak{A}$ a Banach space, and such 
that $\|xy\|\leq \|x\|\|y\|$ for all $x, y\in\mathfrak{A}$. A $C^{*}$-algebra 
is a Banach algebra $\mathfrak{A}$ with an isometric involution $x\mapsto x^{*}$ 
such that $(xy)^{*} = y^{*}x^{*}$ for all $x,y\in\mathfrak{A}$, and such that $\|x^{*}x\| = \|x\|^{2}$ for all $x\in\mathfrak{A}$.

An important subclass of $C^{*}$-algebras is formed by the von 
Neumann (i.e. $W^{*}$-) algebras, which are unital $C^{*}$-algebras 
closed with respect to the ultraweak topology.  
They can be characterized as those $C^{*}$-algebras which admit a 
Banach space predual \cite{Sakai1956}, i.e. a Banach space whose 
dual space is (isomorphic to) the $C^{*}$-algebra.
Two key commutative algebras of functions  to keep in mind are 
\begin{enumerate}
 \item \texorpdfstring{$\mathfrak{A} = C_{0}(\mathcal{M})$}{A=C0(M)}. 
 This is the algebra of continuous $\mathbb{C}$-valued functions 
 on $\mathcal{M}$ ``vanishing at infinity''.  This means that for 
 any $f\in\mathfrak{A}$ and any $\epsilon > 0$, the 
 set $\{x\in\mathcal{M}\;:\; |f(x)|\geq \epsilon\}$ is compact 
 within $\mathcal{M}$.  The algebra $C_{0}(\mathcal{M})$ is endowed 
 with the $\sup$ norm, i.e. $\|f\| = \sup_{x\in\mathcal{M}}\|f(x)\|$.  It is a unital algebra if and only if $\mathcal{M}$ is compact, in which case the identity is the function that is everywhere equal to one: $\identity(x) = 1$ for all $x\in\mathcal{M}$.
 The dual space $\BD{\mathfrak{A}}$ is isometrically isomorphic 
 to the space $\mathfrak{M}(\mathcal{M})$ of all complex-valued regular 
 Borel measures (i.e. Radon measures) with finite norm 
\begin{equation}
\|\nu\| = \sup_{f\in C_{0}(\mathcal{M})} \frac{\left|\int_{\mathcal{M}}f(x)\,
\rmd\nu(x)\right|}{\|f\|}.
\end{equation}

\item \texorpdfstring{$\mathfrak{A} = L^{\infty}(\mathcal{M},\mu)$}{A=L\{infinity\}(M,m)}.
This is the algebra of (equivalence classes of) essentially 
bounded $\mathbb{C}$-valued measurable functions on $\mathcal{M}$, for some choice of 
localizable \cite{Segal1951} regular Borel measure (i.e. Radon measure) 
$\mu$ on $\mathcal{M}$.  The norm on this algebra is the essential 
supremum $\|f\| = \inf \{r>0\;:\; |f(x)|\leq r \quad\mu\text{-a.e.}\}$.  
The dual space $\BD{\mathfrak{A}}$ is isometrically isomorphic to the 
space of finitely additive localizable complex-valued Borel measures 
absolutely continuous with respect to $\mu$ and with finite norm
\begin{equation}
\|\nu\| = \sup_{f\in L^{\infty}(\mathcal{M},\mu)} 
\frac{\left|\int_{\mathcal{M}}f(x)\,\rmd\nu(x)\right|}{\|f\|}.
\end{equation}
$L^{\infty}(\mathcal{M},\mu)$ is a von Neumann algebra, and, 
as such, admits a unique (up to isometric isomorphism) 
predual $\PD{L^{\infty}(\mathcal{M},\mu)}$, which can be identified 
with $L^{1}(\mathcal{M},\mu)$.  Of course, $L^{1}(\mathcal{M},\mu)$ 
can itself be identified with the space of localizable $\mathbb{C}$-valued 
Borel measures absolutely continuous with respect to $\mu$.

\end{enumerate}

\subsection{Order Structure and States}
\label{sec:OrderAndStates}
A function $g$ in $\mathfrak{A} = C_{0}(\mathcal{M})$ or 
$\mathfrak{A} = L^{\infty}(\mathcal{M},\mu)$ is considered to 
be \emph{positive} if $g(x)\geq 0$ everywhere (or almost everywhere, 
as appropriate).  This will be denoted $g\geq 0$.  More abstractly, 
in a general $C^{*}$-algebra $\mathfrak{A}$, $G\geq 0$ if there 
exist $H\in\mathfrak{A}$ such that $G = H^{*}H$.  The positive 
elements of $\mathfrak{A}$ form a closed convex cone.  An element $\phi$ 
of the Banach dual space $\BD{\mathfrak{A}}$ is positive 
if $\phi(G)\geq 0$ for all $G\geq 0$.  And $\phi\in\BD{\mathfrak{A}}$ 
is a \emph{state} if it is positive and $\|\phi\| = 1$.  Note that, 
in the case that $\mathfrak{A}$ is unital, the normalization 
condition $\|\phi\|=1$ is equivalent to $\phi(\identity) = 1$.  
We will denote the set of states as 
$\mathcal{S}(\mathfrak{A})\subset\BD{\mathfrak{A}}$. 
The equivalence of $\BD{\mathfrak{A}}$ with a Banach 
space of $\mathbb{C}$-valued (at least finitely additive) 
measures on $\mathcal{M}$ implies that $\mathcal{S}(\mathfrak{A})$ 
can be identified with the subset of probability measures 
on $\mathcal{M}$.  Thus for any $\rho\in\mathcal{S}(\mathfrak{A})$ 
and any $G\in\mathfrak{A}$, $\rho(G)$ is the expectation 
value of $G$ over the state (probability measure) $\rho$.  
In other words, if $\hat{\rho}$ is the measure associated 
with the state $\rho$, then $\rho(G)$ can be interpreted 
as 
\begin{equation}
\rho(G) = \int_{\mathcal{M}}G(x)\,\rmd\hat{\rho}(x).
\nonumber
\end{equation}
When $\mathfrak{A}$ is a von Neumann algebra such 
as $L^{\infty}(\mathcal{M},\mu)$, we identify the positive 
cone in the predual $\PD{\mathfrak{A}}$ as follows: 
for any $\phi\in \PD{\mathfrak{A}}$, $\phi\geq 0$ if 
$G(\phi)\geq 0$ for all $G\geq 0$ in $\mathfrak{A}$.  
The set of \emph{normal states} $\mathcal{S}_{N}(\mathfrak{A})$ 
is the set of $\rho\in\PD{\mathfrak{A}}$ such that $\rho\geq 0$ and $\|\rho\|=1$.  
Because of the added complications that finitely additive measures impose, 
it may be advantageous to work with states in the predual when we 
take $\mathfrak{A} = L^{\infty}(\mathcal{M},\mu)$.  In what follows, 
we will use the notation associated with duals (rather than preduals), 
but replacing this with the predual (and normal states) in the case of 
von Neumann algebras is straightforward.

Although $\rho(G)$ is the expectation value of $G$, the state $\rho$ carries
more detailed statistical information about the result of measuring the 
observable $G$.  Indeed, repeated measurements of a normal observable $G$ 
(meaning $G^{*}G = GG^{*}$) against an ensemble of systems in identical 
states does not simply yield the expectation value, but samples from a
probability measure of values.  
Let $\mathfrak{A}$ be any unital $C^{*}$-algebra and $\rho\in\mathcal{S}(A)$ 
be a state; if $\mathfrak{A}$ is not unital, pass to the 
unitization $\tilde{\mathfrak{A}}$ and extend $\rho$ by $\rho(\identity) = 1$ 
to a state of $\tilde{\mathfrak{A}}$.  For any normal $G\in\mathfrak{A}$, 
let $\sigma(G)\subset\mathbb{C}$ (topologized as a compact subset of $\mathbb{C}$) 
be the spectrum of $G$, and $C(\sigma(G))$ the $C^{*}$-algebra of 
continuous $\mathbb{C}$-valued functions on $\sigma(G)$. 
There is a continuous functional calculus, expressible as 
a unital $*$-morphism $\Phi_{G}:C(\sigma(G))\to \mathfrak{A}$, which 
is an isomorphism onto the unital subalgebra of $\mathfrak{A}$ generated 
by $G$ [\onlinecite[Prop. I.4.6]{Takesaki2002}].  Then $\rho_{G} := \Phi_{G}^{*}\rho$ 
is a state of $C(\sigma(G))$, and therefore may be identified through 
the Riesz-Markov theorem with a Radon probability measure on $\sigma(G)$, 
or equivalently with a Radon probability measure $\mu$ on $\mathbb{C}$ 
that is supported on $\sigma(G)$.  This measure describes the probability 
of observing value $\lambda\in\mathbb{C}$ when measuring observable $G$ 
on a system in state $\rho$.
\begin{figure}[t]
	\tikzimageswap{\includegraphics{figure1}}{
	\tikzsetnextfilename{figure1}
			\begin{tikzpicture}
				\matrix (m) [matrix of math nodes,ampersand replacement=\&, nodes in empty cells,row sep=2.5em,column sep=5em,minimum width=1em]
				{
					\mathsf{LoCompHaus}:\hspace{-4em} \& \mathcal{M} \& \mathcal{N}\\
					\mathsf{CommC^{*}Alg}:\hspace{-4em} \& C_{0}(\mathcal{M}) \& C_{0}(\mathcal{N})\\
					\& \BD{C_{0}(\mathcal{M})}  \& \BD{C_{0}(\mathcal{N})}\\
					\& \mathfrak{M}(\mathcal{M}) \&  \mathfrak{M}(\mathcal{N})\\
			  	};
			  	\path[-stealth,font=\scriptsize]
				(m-1-2) edge node [above] {$\gamma$} (m-1-3)
				(m-2-3) edge node [above] {$\KO_{\gamma}$} (m-2-2)
				(m-3-2) edge node [above] {$\BD{\KO_{\gamma}}$} (m-3-3)
				(m-3-2) edge [-,draw opacity=0] node [above, rotate=270] {$\simeq$} (m-4-2)
				(m-3-3) edge [-,draw opacity=0] node [above, rotate=270] {$\simeq$} (m-4-3)
				([xshift=8ex, yshift=-1ex]m-1-2.south) edge [double] node [left] {Koopman functor} ([xshift=8ex, yshift=2ex]m-2-2.north)
				([xshift=-8ex, yshift=2ex]m-2-3.north) edge [double] node [right] {Gelfand functor} ([xshift=-8ex, yshift=-1ex]m-1-3.south)
				([xshift=8ex, yshift=-1ex]m-2-2.south) edge [double] node [left] {Banach Dual} ([xshift=8ex, yshift=2ex]m-3-2.north);
			\end{tikzpicture}}
	\caption{Sketch of typical spaces and maps associated with the Gelfand representation of locally compact Hausdorff spaces and proper maps.}
	\label{fig:GelfandDuality}
\end{figure}
\subsection{Categories of Interest}
Here we briefly review some categories of topological spaces, 
measure spaces, and algebras that will be relevant to the remainder of the paper.

\begin{itemize}
\item \textsf{LoCompHaus}: locally compact Hausdorff topological spaces 
with proper continuous maps.

\item \textsf{CommC$^{*}$Alg}: commutative (not necessarily unital) $C^{*}$ 
algebras with continuous (i.e. bounded) non-degenerate $^*$-homomorphisms.  
Here $f:\mathfrak{A}\to\mathfrak{B}$ \emph{non-degenerate} 
means $\Span_{\mathbb{C}}\{f(a)b\;:\;a\in\mathfrak{A},b\in\mathfrak{B}\}$ 
is dense in $\mathfrak{B}$ \cite{Brandenburg2007}.   
See also Lemma \ref{lem:NondegenApproxIds} in Appendix \ref{app:nonDegenHomApproxIds} 
for an alternative characterization of non-degenerate morphisms.  
\item \textsf{LocMeas}: Localizable measure spaces \cite{Segal1951} with 
non-singular measurable functions (defined almost everywhere), i.e.  
$f:(X,\Sigma,\mu)\to(Y,\Sigma',\nu)$ is such that $\mu(f^{-1}(A)) = 0$ 
whenever $\nu(A) = 0$.
\item \textsf{CommVNA}: commutative (unital) von Neumann algebras with 
ultraweakly continuous unital $^*$-homomorphisms.
\item \textsf{C$^{*}$Alg}: (not necessarily unital) $C^{*}$ algebras with 
continuous (i.e. bounded) non-degenerate $^*$-homomorphisms.
\item \textsf{vNAlg}:  (unital) von Neumann algebras with ultraweakly 
continuous unital $^*$-homomorphisms.
\end{itemize}

\begin{figure}[t]
	\tikzimageswap{\includegraphics{figure2}}{
	\tikzsetnextfilename{figure2}
			\begin{tikzpicture}
				\matrix (m) [matrix of math nodes,ampersand replacement=\&, nodes in empty cells,row sep=2.5em,column sep=5em,minimum width=1em]
				{
					\mathsf{LocMeas}:\hspace{-4em} \& (\mathcal{M},\mu) \& (\mathcal{N},\nu)\\
					\mathsf{CommVNA}:\hspace{-4em} \& L^{\infty}(\mathcal{M},\mu) \& L^{\infty}(\mathcal{N},\nu)\\
					\& \PD{L^{\infty}(\mathcal{M},\mu)} \& \PD{L^{\infty}(\mathcal{N},\nu)}\\
					\&  L^{1}(\mathcal{M},\mu) \&  L^{1}(\mathcal{N},\nu)\\
					\& \BD{L^{\infty}(\mathcal{M},\mu)} \& \BD{L^{\infty}(\mathcal{N},\nu)}\\
			  	};
			  	\path[-stealth,font=\scriptsize]
				(m-1-2) edge node [above] {$\gamma$} (m-1-3)
				(m-2-3) edge node [above] {$\KO_{\gamma}$} (m-2-2)
				(m-3-2) edge node [above] {$\PD{\KO_{\gamma}}$} (m-3-3)
				(m-5-2) edge node [above] {$\BD{\KO_{\gamma}}$} (m-5-3)
				(m-3-2) edge [-,draw opacity=0] node [above, rotate=270] {$\simeq$} (m-4-2)
				(m-3-3) edge [-,draw opacity=0] node [above, rotate=270] {$\simeq$} (m-4-3)
				([xshift=10ex, yshift=-1ex]m-1-2.south) edge [double] node [left] {Koopman functor} ([xshift=10ex, yshift=2ex]m-2-2.north)
				([xshift=-10ex, yshift=2ex]m-2-3.north) edge [double] node [right] {Segal functor} ([xshift=-10ex, yshift=-1ex]m-1-3.south)
				([xshift=10ex, yshift=-1ex]m-2-2.south) edge [double] node [left] {Banach Predual} ([xshift=10ex, yshift=2ex]m-3-2.north)
				([xshift=-10ex, yshift=2ex]m-3-3.north) edge [double] node [right] {Banach Dual} ([xshift=-10ex, yshift=-1ex]m-2-3.south)
				([xshift=12ex, yshift=-1ex]m-3-2.south) edge [double] node [left, yshift=1.5em] {Double} node [right, yshift=1.5em] {dual}([xshift=12ex, yshift=2ex]m-5-2.north)
				([xshift=3ex, yshift=0ex]m-2-2.south) edge [double, bend right=88, min distance = 12em, looseness=0] node [right] {Banach dual} ([xshift=3ex, yshift=1ex]m-5-2.north);
			\end{tikzpicture}}
	\caption{Sketch of typical spaces and maps associated with the Segal representation of localizable measure spaces and measurable maps.}
	\label{fig:SegalDuality}
\end{figure}	
When describing the NMZ formalism for a 
dynamical system evolving on a manifold (perhaps infinite-dimensional), 
we will be working largely in the two categories of 
commutative algebras \textsf{CommC$^{*}$Alg} and \textsf{CommVNA}. 
The more general setting -- including the NMZ formalism for quantum mechanics -- 
involves the categories of noncommutative algebras \textsf{C$^{*}$Alg} 
and \text{vNAlg}.

Gelfand duality \cite{Bratteli2003,Brandenburg2007} implies that \textsf{LoCompHaus} is equivalent to \textsf{CommC$^{*}$Alg$^{\mathsf{op}}$} (see Figure \ref{fig:GelfandDuality}).  Thus, for example, any commutative $C^{*}$ algebra is realizable as the algebra $C_{0}(X)$ of continuous functions vanishing at infinity on an essentially unique locally compact Hausdorff (LCH) topological space $X$, and every nondegenerate $C^{*}$-homomorphism between such algebras is realizable as an essentially unique proper map between LCH spaces.  A similar Segal duality \cite{Segal1951} implies that \textsf{LocMeas} is equivalent to \textsf{CommVNA$^{\mathsf{op}}$} (see Figure \ref{fig:SegalDuality}).  These duality theories, which we discuss further in the next subsection, allow for the transformation from nonlinear dynamical systems to completely equivalent linear dynamical systems in Banach spaces.

\subsection{Composition and Transfer Operators}
We will make extensive use of the composition operator 
(i.e., the Koopman operator \cite{Koopman1931}) and 
its adjoint, the transfer operator (i.e., the 
Perron-Frobenius operator) \cite{Beck1995,Sarig2012}.  
Let $\gamma:\mathcal{M}\to\mathcal{N}$ be a continuous map. 
In keeping with our definitions of $\textsf{LoCompHaus}$ 
and $\textsf{LocMeas}$, we require that $\gamma$ is non-singular: 
in $\textsf{LoCompHaus}$ we require that $\gamma$ is 
proper (the pre-image of a compact set is compact), 
and in $\textsf{LocMeas}$ we require 
that $\mu(\gamma^{-1}(A)) = 0$ whenever $\nu(A) = 0$ 
(where $\nu$ is the Borel measure on $\mathcal{N}$). 
Then the composition operator (Koopman) $\KO_{\gamma}$ 
is the $C^{*}$-homomorphism from $\mathfrak{B} = C_{0}(\mathcal{N})$ 
to $\mathfrak{A} = C_{0}(\mathcal{M})$ 
(or from $\mathfrak{B} = L^{\infty}(\mathcal{N},\nu)$ 
to $\mathfrak{A} = L^{\infty}(\mathcal{M},\mu)$), 
defined by 
\begin{equation}
\KO_{\gamma}g = g\circ \gamma.
\label{Koopman}
\end{equation}
As it is a $C^{*}$-homomorphism, $\KO_{\gamma}$ is 
contractive, i.e.,  $\|\KO_{\gamma}\|\leq 1$.  In fact, 
it is easy to see that $\|\KO_{\gamma}\| =1$, since one 
can construct a nontrivial $g\in\mathfrak{B}$ supported in 
the range of $\gamma$, for which we clearly 
have $\|\KO_{\gamma}g\| = \|g\|$ (see Lemma \ref{lem:nonDegenNorm1} 
in Appendix \ref{app:nonDegenHomApproxIds}).

The dual of the composition operator is the transfer operator 
(Perron Frobenius)\cite{Lasota2008} $\BD{\KO_{\gamma}}:\BD{\mathfrak{A}}\to\BD{\mathfrak{B}}$ 
which essentially ``pushes'' states forward along $\gamma$.  
It is straightforward to show that $\BD{\KO_{\gamma}}$ is positive 
and contractive, i.e. $\|\BD{\KO_{\gamma}}\phi\|\leq \|\phi\|$ for 
all $\phi\in\BD{\mathfrak{A}}$.  Moreover, for any $\phi\geq 0$ 
in $\BD{\mathfrak{A}}$,  $\|\BD{\KO_{\gamma}}\phi\| = \|\phi\|$ 
(see Lemma \ref{lem:nonDegenStatePreserv} in Appendix \ref{app:nonDegenHomApproxIds}), 
so that $\BD{\KO_{\gamma}}\mathcal{S}(\mathfrak{A})\subset\mathcal{S}(\mathfrak{B})$. 
When $\mathfrak{A}=C_{0}(\mathcal{M})$ and 
$\mathfrak{B}=C_{0}(\mathcal{N})$, then 
$\BD{\mathfrak{A}}\simeq \mathfrak{M}(\mathcal{M})$ (the space of Radon 
measures on $\mathcal{M}$), 
$\BD{\mathcal{B}}\simeq\mathfrak{M}(\mathcal{N})$, 
and for any measurable $B\subset\mathcal{N}$ and 
$\mu\in\mathfrak{M}(\mathcal{M})$, $(\BD{\KO_{\gamma}}\mu)(B) = 
\mu(\gamma^{-1}(B))$.
In the von Neumann algebra setting, 
when $\PD{\mathfrak{A}}\simeq L^{1}(\mathcal{M},\mu)$ and 
$\PD{\mathfrak{B}}\simeq L^{1}(\mathcal{N},\nu)$, the transfer 
operator $\PD{\KO_{\gamma}}$ may be regarded as a transformation 
between $L^{1}$ functions, defined via the Radon-Nikodym derivative 
\begin{equation}
\big(\PD{\KO_{\gamma}}\rho\big)(y) = \frac{\rmd \eta_{\rho}}{\rmd\nu}(y)
\end{equation}
for all $y\in\mathcal{N}$, where $\eta_{\rho}$ is the push-forward 
measure on $\mathcal{N}$ given by 
\begin{equation}
\eta_{\rho}(A) := \int_{\gamma^{-1}(A)}\rho(x)\rmd\mu(x).
\nonumber
\end{equation}

\subsection{Partial Information}
\label{sec:PartialInformation}
Given the $C^{*}$-algebra of observables $\mathfrak{A}$ on our system, 
we may choose to only observe the system through a subcollection 
$C\subset\mathfrak{A}$.  By only viewing the system through these 
observables in $C$, we obtain only partial information (relative to that 
obtained from using all of the observables in $\mathfrak{A}$).  
However, the set $C$ may not fully represent the set of observables whose 
value we know if we observe using $C$.  For example, if $f,g\in C$, then 
we know the value of $\alpha f + \beta g\in\mathfrak{A}$ for any $a,b\in
\mathbb{C}$. 
We also know the value of $f^{*}$, $g^{*}$, and, at least when $f$ and $g$ 
commute, 
we know the value of $fg$.  So, to represent the set of observables whose 
value we know after measurement, we must expand any mutually commuting set 
of observables $C$ at least to the $*$-algebra containing $C$.  
In this paper (and in keeping with much of the literature in quantum mechanics 
and quantum information) we generally view partial information only 
through the lens of $C^*$-subalgebras of observables.

\section{Nonlinear Dynamical Systems}
\label{sec:CompositionTransferOps}
Consider a nonautonomous dynamical system in the form 
\eqref{eqn:nonautonODE} and assume that the flow $\Phi(t,t_{0})$ 
exists for all $t\geq t_{0}$. Let $\mathfrak{A}$ be a 
commutative $C^{*}$-algebra of $\mathbb{C}$-valued ``observable'' 
functions on a manifold $\mathcal{M}$,  and let $\KO_{\Phi(t,s)}$ 
be the Koopman operator associated with $\Phi(t,s)$.
Clearly, $\KO_{\Phi(t,s)}$ is a $*$-endomorphism acting on $\mathfrak{A}$.

\subsection{Composition and Transfer Operators}
Let $\Phi(t,t_{0})$ be the flow generated by the dynamical system 
\eqref{eqn:nonautonODE}. For any observable $g(x)$ we have
\begin{equation}
 g(x(t))  = \KO_{\Phi(t,t_{0})}g(x_0).
 \label{6}
\end{equation}
By differentiating this equation with respect to $t$ we obtain
\begin{align}
	\frac{\rmd}{\rmd t}\KO_{\Phi(t,t_{0})}g(x_0)
	& = \KO_{\Phi(t,t_{0})}\LF[t]g(x_0),
\end{align}
% \end{subequations}
where $F_{t}^{\xi} = F(\cdot, \xi, t)$ is the vector field from \eqref{eqn:nonautonODE} at fixed $t$ and $\xi$, and $\LF[t]g = \rmd g(F_{t}^{\xi})$ is an ultraweakly densely defined, 
ultraweakly closed, generally unbounded linear operator on $\mathfrak{A}$. 
Therefore,
\begin{align}
	\frac{\rmd}{\rmd t}\KO_{\Phi(t,t_{0})} &= \KO_{\Phi(t,t_{0})}\LF[t], \label{eqn:KoopmanEvolution}
\end{align}
which implies 
\begin{align}
	\KO_{\Phi(t,s)} &= \TOR e^{\int_{s}^{t}\LF[\tau]\,\rmd\tau}.
	\label{9}
% 	g(t) & = \KO_{\Phi(t,0)}g_{0}.
\end{align}
Here $\TOR$ is the time-ordering operator placing later operators to 
the right.  It should be noted that, for time-dependent $\LV[t]$ and 
irreversible flow $\Phi(t,t_{0})$, the observable $g(t)=g(x(t))$ doesn't 
generally obey a simple evolution equation of the form $\rmd g(t)/\rmd t = R_{t}g(t)$. 
This is because there need not exist a time-dependent 
operator $R_{t}$ such that 
\begin{equation}
\frac{\rmd}{\rmd t}g(t) = \KO_{\Phi(t,0)}\LV[t]g_{0} = R_{t}\KO_{\Phi(t,0)}g_{0} = R_{t}g(t).
\end{equation}
Now, consider a fixed state $\rho_{0}\in\mathcal{S}(\mathfrak{A})$. By using 
\eqref{6} and \eqref{9} we have   
% \begin{subequations}
\begin{align}
\rho_{0}(g(t)) = \rho_{0}\left(\TOR e^{\int_{0}^{t}\LF[\tau]\,\rmd\tau}g_{0}\right).
\nonumber
% & = \left(\TOL e^{\int_{0}^{t}\BD{LF[\tau]}\,\rmd\tau}\rho_{0}\right)(g_{0}) 
% = \rho(t)(g_{0})\\
\end{align}
Differentiation with respect to $t$ yields 
\begin{align}
\frac{\rmd}{\rmd t}\big[\rho_{0}(g(t))\big] & = 
\rho_{0}\left(\frac{\rmd}{\rmd t}g(t)\right) \nonumber\\ 
& = \rho_{0}\left(\TOR e^{\int_{0}^{t}\LF[\tau]\,\rmd\tau}\LF[t]g_{0}\right) \notag\\
& = \left(\BD{\LF[t]}\TOL e^{\int_{0}^{t}\BD{\LF[\tau]}\,\rmd\tau}\rho_{0}\right)(g_{0}) \notag\\
& = \left(\BD{\LF[t]}\rho(t)\right)(g_{0}),
\end{align}
i.e., 
\begin{align}
\frac{\rmd}{\rmd t}\rho(t) &= \BD{\LF[t]}\rho(t).
\label{Liouville}
%\\
%\rho(t) & = e^{t\BD{\mathcal{L}}}\rho_{0},
\end{align}
The formal solution to \eqref{Liouville} is 
\begin{equation}
\rho(t)= \BD{\KO_{\Phi(t,s)}}\rho(0),
\end{equation}
where
\begin{equation}
\BD{\KO_{\Phi(t,s)}} = \TOL e^{\int_{0}^{t}\BD{\LF[\tau]}\,\rmd\tau}
\end{equation}
is the transfer operator (Perron-Frobenius) associated to the 
flow map $\Phi(t,s)$.
The Gelfand and Segal dualities (see Fig. \ref{fig:GelfandDuality} 
and Fig. \ref{fig:SegalDuality}) imply that the linear dynamics 
of the composition operator in $\mathcal{B}(\mathfrak{A})$ [or the linear dynamics 
of the transfer operator in $\mathcal{B}(\BD{\mathfrak{A}})$] is completely 
equivalent to the nonlinear dynamics generated 
by \eqref{eqn:nonautonODE} on $\mathcal{M}$.

\section{Operator Algebraic Formulation}
\label{sec:GeneralFramework}
So far we have framed the discussion around dynamical 
systems evolving on manifolds, leading to commutative 
observable algebras $\mathfrak{A}$. However, very little of what 
we will do will depend on the commutativity of $\mathfrak{A}$.  
By and large, if we have any $C^{*}$-algebra $\mathfrak{A}$ 
and a linear evolution operator which is an $*$-endomorphism 
on $\mathfrak{A}$ that evolves as in \eqref{eqn:KoopmanEvolution}, 
it is possible to apply the NMZ formulation to obtain a generalized 
Langevin equation for the reduced dynamics.  
In particular, this applies to quantum mechanics.  
In this section we develop the formalism in this more 
general perspective.

Let $\mathfrak{A}$ be a (not necessarily commutative) $C^{*}$-algebra.  
We will typically let $\BD{\mathfrak{A}}$ denote the Banach space dual, 
and $\mathcal{S}(\mathfrak{A})\subset\BD{\mathfrak{A}}$, the closed 
convex set of positive norm one linear functionals on $\mathfrak{A}$, 
be the set of states on $\mathfrak{A}$.  However, when $\mathfrak{A}$ 
is a von Neumann algebra (i.e. it admits a Banach predual), 
we will abuse the notation to let $\BD{\mathfrak{A}}$ denote 
the Banach space predual, and $\mathcal{S}(\mathfrak{A})\subset\BD{\mathfrak{A}}$,
the closed convex set of positive norm one elements of 
the predual, be the set of (normal) states on $\mathfrak{A}$. 
A (not necessarily bounded) linear operator $\mathcal{L}$ acting 
on $\mathcal{A}$ is called a \emph{$\ast$-derivation} if 
for any $f,g\in\mathfrak{A}$, $\mathcal{L}(fg^{\ast}) = 
(\mathcal{L}f)g^{\ast} + f(\mathcal{L}g)^{\ast}$.

Suppose that we have available a time-dependent family of closed, 
densely defined linear $\ast$-derivations $\{\LV[t]\}_{0\leq t\leq T}$, 
along with a $*$-endomorphism $\EO(t,s)$ on $\mathfrak{A}$,
strongly continuous in both $s$ and $t$, and satisfying for $t\geq s$
\begin{equation}
	\EO(t,s) = \id_{\mathfrak{A}} + \int_{s}^{t}\EO(\tau,s)\LV[\tau]\,\rmd\tau,
\end{equation}
i.e., satisfying, in the sense of Carath{\'e}odory, the differential equation
\begin{equation}
	\frac{\rmd}{\rmd t}\EO(t,s) = \EO(t,s)\LV[t]
\end{equation}
such that $\EO(s,s)$ is the identity morphism $\id_{\mathfrak{A}}$ on $\mathfrak{A}$.  
In the case of von Neumann algebras it suffices for $\{\LV[t]\}$ to be 
weak-*-densely defined and for $\EO(t,s)$ to be weak-* continuous 
in $s$ and $t$.  Then 
\begin{equation}
\EO(t,s) = \TOR e^{\int_{s}^{t}\LV[\tau]\,\rmd\tau}
\label{SemiG}
\end{equation}
is contractive, i.e.,  $\|\EO(t,s)X\|\leq\|X\|$ for 
all $t\geq s$  and all $X\in\mathfrak{A}$, by virtue of 
being a $*$-endomorphism.  This $\EO(t,t_{0})$ serves as the evolution operator for observables in $\mathfrak{A}$, i.e., $\EO(t,t_{0})(G_{0}) = G(t)$.  In the classical case described in Section \ref{sec:CompositionTransferOps}, $\EO(t,t_{0}) = \KO_{\Phi(t,t_{0})}$.

\subsection{Nakajima-Mori-Zwanzig Method of Projections}

We now introduce a projection $\Prj$ on $\mathfrak{A}$ and 
develop from it the equations that comprise the NMZ 
formalism.  The nature and properties of $\Prj$ will be 
discussed in detail in section \ref{sec:ConditionalExpectations}, 
but for now it will suffice to assume only that $\Prj$ is 
a bounded linear operator acting on $\mathfrak{A}$, 
and that $\Prj^{2} = \Prj$.  The NMZ formalism describes 
the evolution of observables initially in the 
image of the $\Prj$.  Because the evolution of observables 
is governed by $\EO(t,t_{0})$ (see equation \eqref{SemiG}), 
we seek an evolution equation for $\EO(t,t_{0})\Prj$.  To this end, 
recall first the well-known Dyson identity: 
if 
\begin{equation}
 Y(t,s) = \TOR e^{\int_{s}^{t}A(\tau)\,\rmd\tau}\quad \textrm{and}\quad 
 Z(t,s) = \TOR e^{\int_{s}^{t}B(\tau)\,\rmd\tau}
\end{equation}
then
\begin{align}
	Y(t,t_{0})-Z(t,t_{0}) &= \int_{t_{0}}^{t}
	\frac{d}{ds}\Big(Y(s,t_{0})Z(t,s)\Big)\,\rmd s \notag\\
	& = \int_{t_{0}}^{t}Y(s,t_{0})(A(s)-B(s))Z(t,s)\,\rmd s.
\end{align}
Applying this to $Y(t,t_{0}) = \EO(t,t_{0})$ (equation \eqref{SemiG}) 
and $Z(t,t_{0}) = \TOR e^{\int_{t_{0}}^{t}\PrjC\LV[\tau]\,\rmd\tau}$ (here 
$\PrjC=1-\Prj$ denotes the complementary projection), we find
\begin{align}
\EO(t,t_{0}) & = \TOR e^{\int_{t_{0}}^{t}\PrjC\LV[\tau]\,\rmd\tau} + 
\int_{t_{0}}^{t}\EO(s,t_{0})\Prj\LV[s]\TOR e^{\int_{s}^{t}\PrjC\LV[\tau]\,\rmd\tau}\,
\rmd s.
\end{align}
A differentiation with respect to time and composition with $\Prj$ yields the 
following generalized Langevin equation for $\EO(t,t_{0})\Prj$
\begin{align}
\frac{\rmd}{\rmd t}\EO(t,t_{0})\Prj & = 
% \EO(t,t_{0})\LV[t]\Prj\\
% & = \EO(t,t_{0})\Prj\LV[t]\Prj + \EO(t,t_{0})\PrjC\LV[t]\Prj\\
\EO(t,t_{0})\Prj\LV[t]\Prj + \TOR e^{\int_{t_{0}}^{t}\PrjC\LV[\tau]\,\rmd\tau}\PrjC\LV[t]\Prj\notag\\
& \quad +  \int_{t_{0}}^{t}\EO(s,t_{0})\Prj\LV[s]\TOR e^{\int_{s}^{t}\PrjC\LV[\tau]\,\rmd\tau}\PrjC\LV[t]\Prj\,\rmd s.
\label{eqn:NMZforCompositionOperator}
\end{align}
Letting $f_{0} = \Prj g_{0}$ represent an observable 
initially in the image of $\Prj$, the NMZ equation describes 
its evolution as
\begin{align}
\frac{\rmd}{\rmd t}f(t) & =  \EO(t,t_{0})\Prj\LV[t]f_{0} + 
\TOR e^{\int_{t_{0}}^{t}\PrjC\LV[\tau]\,\rmd\tau}\PrjC\LV[t]f_{0}\notag\\
& \quad +  \int_{t_{0}}^{t}\EO(s,t_{0})\Prj\LV[s]
\TOR e^{\int_{s}^{t}\PrjC\LV[\tau]\,\rmd\tau}\PrjC\LV[t]f_{0}\,\rmd s.
\label{eqn:NMZforObservables}
\end{align}
The Banach dual of \eqref{eqn:NMZforCompositionOperator} yields 
the NMZ equation for states 
\begin{align}
\frac{\rmd}{\rmd t}\BD{\Prj}& \BD{\EO(t,t_{0})} 
=  \BD{\Prj}\BD{\LV[t]}\BD{\Prj}\BD{\EO(t,t_{0})} + 
\BD{\Prj}\BD{\LV[t]}\TOL e^{\int_{t_{0}}^{t}\BD{\PrjC}
\BD{\LV[\tau]}\,\rmd\tau}\BD{\PrjC} \notag\\
& +  \BD{\Prj}\BD{\LV[t]}\int_{t_{0}}^{t}
\TOL e^{\int_{s}^{t}\BD{\PrjC}\BD{\LV[\tau]}\,\rmd\tau}
\BD{\PrjC}\BD{\LV[s]}\BD{\Prj}\BD{\EO(s,t_{0})}\,\rmd s.
\label{eqn:NMZforTransferOperator}
\end{align}
In this case, the generalized Langevin equation describing  
the evolution of a projected state $\sigma(t) = \BD{\Prj}\rho(t) 
= \BD{\Prj}\BD{\EO(t,t_{0})}\rho_{0}$ is
\begin{align}
\frac{\rmd}{\rmd t}\sigma(t) =& \BD{\Prj}\BD{\LV[t]}\sigma(t) + \BD{\Prj}\BD{\LV[t]}\TOL e^{\int_{t_{0}}^{t}\BD{\PrjC}\BD{\LV[\tau]}\,\rmd\tau}\BD{\PrjC}\rho_{0} \notag\\
& +  \BD{\Prj}\BD{\LV[t]}\int_{t_{0}}^{t}\TOL e^{\int_{s}^{t}\BD{\PrjC}\BD{\LV[\tau]}\,\rmd\tau}\BD{\PrjC}\BD{\LV[s]}\sigma(s)\,\rmd s.
	\label{eqn:NMZforStates}
\end{align}
The NMZ equations \eqref{eqn:NMZforObservables} 
and \eqref{eqn:NMZforStates} describe the exact evolution of 
the reduced observable algebras and states. 
In the context of classical dynamical systems, equations 
\eqref{eqn:NMZforObservables} and 
\eqref{eqn:NMZforStates} describe, respectively, the evolution of 
a phase space function (observable) and the corresponding probability 
density function.  
We would like to emphasize that the duality we just established 
between the NMZ formulations \eqref{eqn:NMZforObservables} and \eqref{eqn:NMZforStates} extends the well-known duality between 
Koopman and Perron-Frobenious operators to reduced observable 
algebras and states.

%\red{$\BD{\Prj}\BD{\EO(t,t_{0})}$ is dual to $\EO(t,t_{0})\Prj$ which describes the evolution of phase-space functions in the image of $\Prj$.}

\subsection{Matrix Form}

Unlike the NMZ evolution equation for states \eqref{eqn:NMZforStates}, 
the evolution equation for observables \eqref{eqn:NMZforObservables} 
involves the explicit application of the full evolution 
operator \eqref{SemiG}.  
In other words, while \eqref{eqn:NMZforCompositionOperator} is 
a generalized Langevin equation for the evolution 
operator $\EO(t,t_{0})\Prj$, \eqref{eqn:NMZforObservables} is 
not explicitly in Langevin form.  However, 
the equation may be put into a Langevin-type form by considering 
a basis for the image of $\Prj$.
To this end, suppose that 
\begin{align}
\{g_{k}\}\subset&\left(\bigcap_{t\in[t_{0},T]}\mathscr{D}(\LV[t])\right)
\cap \nonumber\\
& \left(\bigcap_{t_{0}\leq s\leq t\leq T}
\mathscr{D}\Big(\LV[s]\TOR e^{\int_{s}^{t}\PrjC\LV[\tau]\,
\rmd\tau}\PrjC\LV[t]\Big)\right)\nonumber
\end{align}
is a basis for the image of $\Prj$.   
Let $\{\Omega_{ij}(t)\}$, $\{R_{i}(t)\}$, and $\{K_{ij}(t,s)\}$ be 
the unique functions such that 
% \begin{subequations}
\begin{align}
\Prj\LV[t]g_{i} &= \sum_{j} \Omega_{ij}(t)g_{j}\nonumber\\
R_{i}(t) &= \TOR e^{\int_{t_{0}}^{t}\PrjC\LV[\tau]\,\rmd\tau}
\PrjC\LV[t]g_{i}\nonumber\\
\Prj\LV[s]\TOR e^{\int_{s}^{t}\PrjC\LV[\tau]\,\rmd\tau}\PrjC\LV[t]g_{i} &= 
\sum_{j}K_{ij}(t,s)g_{j},\nonumber
\end{align}
% \end{subequations}
i.e. $\Omega(t)$ and $K(t,s)$ are the matrix representations of $\Prj\LV[t]\Prj$ and $\Prj\LV[s]\TOR e^{\int_{s}^{t}\PrjC\LV[\tau]\,\rmd\tau}\PrjC\LV[t]\Prj$, respectively.
Then
\begin{equation}
\frac{\rmd \vec{g}}{\rmd t}(t) = \Omega(t)\vec{g}(t) +
\vec{R}(t) + \int_{t_{0}}^{t}K(t,s)\vec{g}(s)\,\rmd s.
\end{equation}
This is the generalized Langevin equation for the vector-valued
observable $\vec{g}(t)$.

\section{Conditional Expectations}
\label{sec:ConditionalExpectations}
We now consider a special class of projections 
on $\mathfrak{A}$, i.e., the conditional expectations. 
After a brief introduction to these operators, we argue 
that the projection $\Prj$ used in the NMZ formalism 
should typically be a conditional expectation, and 
discuss the problem of constructing these projections.
In the $C^{*}$-algebra literature, the notion of 
conditional expectation \cite{Umegaki1954} has been be 
developed as a noncommutative generalization of the more 
traditional idea of conditional expectation known from 
probability theory.  It is defined to be a positive contractive 
projection $\Prj$ on a $C^{*}$ algebra $\mathfrak{A}$ with 
image equal to a $C^{*}$-subalgebra $\mathfrak{B}\subset\mathfrak{A}$, 
such that $\Prj(bab') = b\Prj(a)b'$ for 
all $a\in\mathfrak{A}$ and $b,b'\in\mathfrak{B}$.  
This implies also that $\Prj$ is completely 
positive \cite{Nakamura1960}.

Next, we show that under natural assumptions the NMZ 
projection $\Prj$ must be a conditional expectation. 
% Although the NMZ formalism allows us to write 
% equations \eqref{eqn:NMZforCompositionOperator} 
% and \eqref{eqn:NMZforTransferOperator} for arbitrary 
% projections $\Prj$, we want to interpret these as evolution 
% equations for observables and states.  
% This imposes some important restrictions on $\Prj$.  
First, we will want $\Prj$ to project 
onto a $C^{*}$-subalgebra $\mathfrak{B}\subset\mathfrak{A}$.  
This is because the projection represents a restriction to 
partial information about the system, represented 
as a subset of observables, and, as argued in \ref{sec:PartialInformation}, 
such partial information is embodied in the $C^{*}$-subalgebra 
generated by the monitored observables.  
Secondly, keeping the dual pictures in mind, when we
introduce the projection $\Prj$ on $\mathfrak{A}$, we 
want to ensure that $\BD{\Prj}$ preserves states, i.e., 
that $\BD{\Prj}$  maps $\mathcal{S}(\mathfrak{A})$ into itself, 
so that the dual NMZ Langevin equation \eqref{eqn:NMZforStates} 
describes the evolution of states associated with the 
reduced system.  For any $C^{*}$-algebra $\mathfrak{A}$ 
of observables of a system and any state-preserving 
projection $\Prj$ onto a $C^{*}$-subalgebra, $\Prj$ is 
a {\em conditional expectation}, as we show 
in Appendix \ref{app:statePreservingMaps}.

It should be noted that, since $\Prj$ is 
contractive (i.e., $\|\Prj\| = 1$), the complementary 
projection $\PrjC = \identity - \Prj$ is 
bounded: $\|\PrjC\| = \|\identity - \Prj\|\leq\|\identity\| + \|\Prj\| = 2$.  
Indeed, as we will see in Example 2 below, 
the norm of $\PrjC$ can achieve this bound, and therefore $\PrjC$
is not, in general, contractive.

\subsection{Constructing Conditional Expectations}
We aim at constructing explicitly a projection operator 
representing an observable $g(x)$ (see equation \eqref{6}).
In the language of operator algebras this is equivalent to asking 
the following question: How do we find a conditional expectation projecting 
onto a chosen subalgebra $\mathfrak{B}\subset\mathfrak{A}$?  
It should first be noted that, for general 
$\mathfrak{B}\subset\mathfrak{A}$ $C^{*}$-algebras, 
there need not exist a conditional expectation 
$\Prj:\mathfrak{A}\to\mathfrak{B}$. In fact, such 
conditional expectations are 
rare \cite{Takesaki1972,Blackadar2006}. 
However, if $\mathfrak{A}$ admits a faithful, 
tracial state $\tau$ and $\mathfrak{B}\subset\mathfrak{A}$ 
is a nondegenerate $C^{*}$-subalgebra, 
then there exists a {\em unique conditional 
expectation} $\Prj:\mathfrak{A}\to\mathfrak{B}$ such 
that $\tau(AB) = \tau(\Prj(A)B)$ for 
each $A\in\mathfrak{A},B\in\mathfrak{B}$ [\onlinecite[Theorem 7]{Kadison2004}].  
This conditional expectation is ultraweakly 
continuous if $\mathfrak{A}$ and $\mathfrak{B}$ are von Neumann 
algebras and $\tau$ is normal.  
Any faithful state $\rho$ of $\mathfrak{A}$ induces 
an inner product on $\mathfrak{A}$ via 
$\langle x, y\rangle = \rho(x^{*}y)$ \cite{Kadison1962}.  
Then there exists a unique projection from $\mathfrak{A}$ 
onto the closed subspace $\mathfrak{B}$.  
When $\tau$ is a faithful, tracial state, 
this projection is the unique ``$\tau$-preserving'' 
conditional expectation $\Prj$ 
satisfying $\tau(AB) = \tau(\Prj(A)B)$.

On $\mathfrak{A} = L^{\infty}(\mathcal{M},\mu)$, the faithful, 
normal, tracial states are the probability 
measures $\rho$ on $\mathcal{M}$ that are equivalent 
to $\mu$, in the sense that $\mu\ll \rho$ and $\rho\ll \mu$. 
And on $\mathfrak{A} = C_{0}(\mathcal{M})$, the faithful, 
tracial states are the strictly positive Radon probability 
measures, i.e. the Radon probability measures $\mu$ for 
which $\mu(G) > 0$ for all nonempty open sets $G\subset \mathcal{M}$. 
Thus, when $\mathfrak{A}$ is commutative, there always exists a 
conditional expectation onto $\mathfrak{B}\subset\mathfrak{A}$. 
Moreover, for von Neumann $\mathfrak{A}$ and commutative 
von Neumann $\mathfrak{B}\subset \mathfrak{A}$, there 
exists a conditional expectation $\Prj:\mathfrak{A}\to\mathfrak{B}$ 
[\onlinecite[Prop. 6]{Kadison2004}]. Conditional expectations 
can also exist when $\mathfrak{A}$ and $\mathfrak{B}$ are 
both noncommutative.  An example is considered 
in Section \ref{sec:QuantumExample}.  For now, let 
us provide simple examples involving commutative algebras 
of observables.

\begin{example} Consider a three-dimensional dynamical system 
such as the Kraichnan-Orszag system \cite{Orszag1967}, or the 
Lorenz system evolving on the manifold 
$\mathcal{M}=\mathbb{R}^{3}$.
% \begin{align}
% 	\frac{\rmd x_{1}}{\rmd t} =& x_{2}x_{3}\notag\\
% 	\frac{\rmd x_{2}}{\rmd t} =& x_{1}x_{3}\notag\\
% 	\frac{\rmd x_{3}}{\rmd t} =& -2x_{1}x_{2}.\notag
% \end{align}
%
% \begin{subequations}
% \begin{align}
% F&:\mathbb{R}^{3}\to\mathbb{R}^{3}\\
% F(\mathbf{x}) & = [x_{2}x_{3}, x_{1}x_{3}, -2x_{1}x_{2}]\\
% \dot{\mathbf{x}} & = F(\mathbf{x})
% \end{align}
% \end{subequations}}
Let $\mathfrak{A}=L^{\infty}(\mathbb{R}^{3},\lambda)$ where $\lambda$ 
is Lebesgue measure on $\mathbb{R}^{3}$, and 
let $h:\mathbb{R}^{3}\to\mathbb{R}$ (phase space function) be given by 
\begin{equation}
h(x) = e^{-(x_{1}^{2}+x_{2}^{2})},
\label{h}
\end{equation}
where $x_1(t)$ and $x_2(t)$ are the first two phase variables 
of the system.
Then, the subalgebra $\mathfrak{B}$ generated by $h$ 
is the set of all $g\in\mathfrak{A}$ that factor over $h$, i.e., 
functions in the form $g = f\circ h$ for some
$f:\mathbb{R}\to\mathbb{R}$.
In other words, $\mathfrak{B}$ is the subalgebra of $\mathfrak{A}$ comprising 
those functions $g$ that are constant on level sets of $h$. 
A conditional expectation projecting onto this subalgebra may 
be obtained by starting with the faithful, normal, tracial 
state $\tau$ on $\mathbb{R}^{3}$ 
\begin{equation}
\tau(f) = \int_{\mathbb{R}^{3}}f(x)\rmd\mu(x)\nonumber
\end{equation}
given by the Gaussian measure 
\begin{equation} 
\rmd\mu(x) = \frac{1}{(2\pi\sigma^{2})^{\frac{3}{2}}}
e^{-\frac{x_{1}^{2}+x_{2}^{2}+x_{3}^{2}}{2\sigma^{2}}}\rmd\lambda(x).
\end{equation}
The construction of the projection onto the subalgebra  $\mathfrak{B}$ 
generated by \eqref{h} then proceeds as follows. 
Since $g\in\mathfrak{B}$ is constant on level sets of $h$, 
in order to ensure $\tau(fg) = \tau(\Prj(f)g)$ for 
all $f\in\mathfrak{A}$ and all $g\in\mathfrak{B}$, 
we require that $\Prj(f)(x)$ is the mean 
value of $f$ on the level set of $h$ through $x$. 
In other words,
\begin{equation}
(\Prj f)(x) =\begin{cases} 
\displaystyle\int_{\mathbb{R}}\frac{e^{-{y_{3}^{2}}/(2\sigma^{2})}}
{\sigma(2\pi)^{\frac{1}{2}}}f(0,0,y_{3})\,\rmd y_{3}, \quad x_{1} = x_{2} = 0\\
\displaystyle\int_{y\in C(x)}\frac{e^{-y_{3}^{2}(2\sigma^{2})}}
{\sigma(2\pi)^{\frac{3}{2}}\sqrt{x_{1}^{2}+x_{2}^{2}}}\,f(y)\,\rmd y
\end{cases}
\end{equation}
where $C(x)$ is the cylinder $\{y: y_{1}^{2}+y_{2}^{2} = x_{1}^{2}+x_{2}^{2}\}
\subset\mathbb{R}^{3}$ (level set of $h$ 
containing $x$). It is readily verified that this $\Prj$ is 
a projection from $\mathfrak{A}$ to $\mathfrak{B}$ and 
that $\tau(fg)= \tau(\Prj(f)g)$, as desired.  
Of course, this projection works for any 
dynamical system on $\mathbb{R}^{3}$ where we're 
measuring \eqref{h}.  
\end{example}

\begin{example}
Let $(S^{1},\lambda)$ be the circle with 
normalized Haar measure $\lambda$, represented for example 
as $S^{1} = \{e^{2i\pi s}\;:\;s\in[0,1)\}$ with $\lambda$ Lebesgue 
measure on $[0,1)$.  Let $(\mathcal{M},\mu) = T^{2}$ be the two-dimensional 
torus with normalized Haar measure, i.e., $(\mathcal{M},\mu) = 
(S^{1},\lambda)\times (S^{1},\lambda)$.  Let $\mathfrak{A} = L^{\infty}(\mathcal{M},\mu) 
= L^{\infty}(S_{1},\lambda)^{\otimes 2}$.  
Now consider the observable function $g\in\mathfrak{A}$ given by 
\begin{equation}
g\left(e^{2i\pi s_{1}}, e^{2i\pi s_{2}}\right) = e^{2i\pi s_{1}}.
\end{equation}
The von Neumann subalgebra $\mathfrak{B}\subset\mathfrak{A}$ 
generated by $g$ is the subalgebra of functions that factor 
over $g$.  In other words, $\mathfrak{B}$ is the subalgebra 
of functions $h\in\mathfrak{A}$ that are constant on level 
sets of $g$
\begin{equation}
h\left(e^{2i\pi s_{1}},e^{2i\pi s_{2}}\right) = 
h\left(e^{2i\pi s_{1}},1\right)\quad \forall s_{2}\in[0,1).
\end{equation}
A conditional expectation $\Prj$ onto $\mathfrak{B}$ is given by
\begin{equation}
\big(\Prj f\big)(e^{2i\pi s_{1}},e^{2i\pi s_{2}}) = \int_{0}^{1}f\big(e^{2i\pi s_{1}},e^{2i\pi r}\big)\,\rmd r.
\end{equation}
Next we show that the complementary projection $\PrjC=1-\Prj$ is not a contraction 
in this case. To this end, let 
\begin{equation}
f_{n}(e^{2i\pi s_{1}}, e^{2i\pi s_{1}}) = -1 + 2e^{-n^{2}\frac{(s_{2}-1/2)^{2}}{2}}
\end{equation}
so that $\|f_{n}\|_{\infty} = 1$, $\Prj f_{n}\equiv v_{n}$ is a constant function with value 
\begin{equation}
v_{n} = -1 + 2\int_{0}^{1}e^{-n^{2}\frac{(r-1/2)^{2}}{2}}\,\rmd r \searrow -1
\end{equation} 
and $\|\PrjC f_{n}\| = \|f_{n}-v_{n}\identity\| \nearrow 2$, so that
\begin{equation}
\|\PrjC\| = 2.
\end{equation}
\end{example}

\subsection{Unbounded Observables}
\label{sec:AffiliatedOperators}
By their nature, $C^{*}$-algebras comprise \emph{bounded} observables of the system.  In many cases, however, one is interested in unbounded observables.  For example, in studying physical particle systems, we are often interested in the positions and momenta of the particles.  The theory of $C^{*}$-algebras and von Neumann algebras have been extended (in two independent ways) to incorporate certain classes of well-behaved unbounded operators.  In both cases, these operators are called \emph{affiliated} operators, and the underlying idea is to identify those unbounded operators that may in some sense be approximated by the (bounded) elements of the algebra.

\begin{enumerate}
\item {\em von Neumann Algebras}.
When $\mathfrak{A}$ is a von Neumann algebra acting 
on a Hilbert space $\HH$, a closed, densely-defined 
operator $T$ is affiliated with $\mathfrak{A}$ whenever 
$U^{*}TU = T$ for all unitary operators $U\in\mathcal{B}(\HH)$ 
that commute with $\mathfrak{A}$ \cite{Murray1936, Kadison1997}.  
For example, in the case $\mathfrak{A} \simeq L^{\infty}(X,\sigma,\mu)$ 
of a commutative von Neumann algebra isomorphic (via Segal duality) to 
the algebra of essentially bounded functions on a localizeable 
measure space, the affiliated operators $\mathfrak{A}^{\eta}$ are 
represented by the set a of all measurable $\mathbb{C}$-valued 
functions on $(X,\sigma,\mu)$.  Given a von Neumann 
algebra $\mathfrak{A}$ and a normal affiliated operator $T$, there 
is a minimal von Neumann subalgebra $\mathfrak{B}\subset\mathfrak{A}$ 
such that every $T$ is affiliated with $\mathfrak{B}$.  
This $\mathfrak{B}$ is the Abelian subalgebra generated 
by $T$ [\onlinecite[Thm. 5.6.18]{Kadison1997}].  With respect to the 
polar decomposition $T = U|T|$, $\mathfrak{B}$ contains $U$ as 
well as all spectral projections of $|T|$ [\onlinecite[Lemma 2.5.8]{Bratteli2003}].

\item {\em $C^{*}$-Algebras}.
When $\mathfrak{A}$ is a $C^{*}$-algebra, a densely-defined 
operator $T$ acting on $\mathfrak{A}$ is affiliated 
with $\mathfrak{A}$ when $T$ admits a $T^{*}$ such 
that $a^{*}T(b) = [T^{*}(a)]^{*}b$ for all 
$b\in\mathscr{D}(T)\subset\mathfrak{A}$ and all $a$ in 
a dense subset of $\mathfrak{A}$, and when 
$\id_{\mathfrak{A}} + T^{*}T$ has a dense range in $\mathfrak{A}$ 
\cite{Woronowicz1991, Woronowicz1992, Lance1995}. 
For example, when $\mathfrak{A}\simeq C_{0}(X)$ for some 
locally compact Hausdorff space $X$, the affiliated operators 
are represented by the set $C(X)$ of all continuous $\mathbb{C}$-valued 
functions on $X$ \cite{Woronowicz1991}.  If $\mathfrak{A}$ is unital, 
then the set $\mathfrak{A}^{\eta}$ of affiliated operators may be 
identified with $\mathfrak{A}$ itself, which is analogous to 
the statement that every continuous function on a compact 
space is bounded \cite{Woronowicz1991}.  

\end{enumerate}

As in the case of measuring a bounded normal operator 
(see Section \ref{sec:OrderAndStates}), there is a continuous 
functional calculus for any normal affiliated operator $T\in\mathfrak{A}^{\eta}$, 
i.e. a injective nondegenerate $*$-morphism 
$\Phi_{T}:C_{0}(\sigma(T))\to M(\mathfrak{A})$, where $M(\mathfrak{A})$ 
is the multiplier algebra of $\mathfrak{A}$ \cite{Woronowicz1991, Woronowicz1992}. 
Any state $\rho\in\mathfrak{A}$ extends uniquely to a state 
of $M(\mathfrak{A})$ and restricts to a state of $C_{0}(\sigma(T))$, 
namely $\rho_{T} := \BD{\Phi_{T}}\rho$.  As before, the Riesz-Markov theorem 
then yields a probability measure on $\mathbb{C}$, supported 
on $\sigma(T)$, which is interpreted as the probability of the possible 
outcomes of measurement of $T$ on a system in state $\rho$.  
When $\mathfrak{A}$ is unital, $M(\mathfrak{A}) = \mathfrak{A}$, and 
the image of $\Phi_{T}$ is the subalgebra of $\mathfrak{A}$ generated 
by $T$.

\begin{example} Consider a nonlinear dynamical system evolving 
on $\mathcal{M} = \mathbb{R}^{n}$, for example the semi-discrete 
form of an initial/boundary value problem for a PDE. Let, 
$\mathfrak{A} = C_{0}(\mathbb{R}^{n})$ and   
$\mathfrak{B}$ be the subalgebra generated by the observable 
\begin{equation}
h(x) = \sum_{i=1}^{n}a_{i}x_{i},
\end{equation}
for some fixed $a=(a_1,...,a_n)\in\mathbb{C}^{n}$. Note that $h(x)$ may 
represent the series expansion of the solution to the aforementioned 
PDE. Although $h\notin\mathfrak{A}$ (it doesn't vanish at infinity), 
we can still represent the partial information embodied 
in $h$ by the subalgebra $\mathfrak{B}\subset\mathfrak{A}$ of 
functions $g\in\mathfrak{A}$ that factor over $h$, i.e. 
for which there exists $r$ such that $g = r\circ h$.  
Thus, $\mathfrak{B}$ comprises functions in $C_{0}(\mathcal{M})$ 
that are constant on level sets of $h$. A conditional expectation 
onto this $\mathfrak{B}$ is given by
\begin{equation}
(\Prj f)(x) = e^{\frac{|\langle x,a\rangle|^{2}}
{\|a\|^{2}}}\int_{C(x)}\frac{e^{-\frac{\|y\|^{2}}{2\sigma^{2}}}}
{(2\pi\sigma^{2})^{\frac{n-1}{2}}}f(y)\,\rmd y,
\end{equation}
where $C(x) = \{y: h(y) = h(x)\}$.
\end{example}

\section{Dimensional Reduction}
\label{sec:PushForwardState}
In many cases, the reduction to a coarser algebra 
of observables involves pushing the problem from the 
original phase space $\mathcal{M}$ to a new (typically smaller, 
lower dimensional) space $\mathcal{N}$ via a 
continuous (and appropriately nonsingular) 
map $\gamma:\mathcal{M}\to\mathcal{N}$.  In other words, 
rather than worrying about how a state 
$\rho_{0}\in\mathcal{S}(\mathfrak{A})$ evolves, we are 
interested only in how the push-forward state 
$\sigma_{0} := \BD{\KO_{\gamma}}\rho_{0}\in\mathcal{S}(\mathfrak{B})$ 
evolves, where $\KO_{\gamma}:\mathfrak{B}\to\mathfrak{A}$ is the 
Koopman homomorphism from the appropriate observable 
algebra $\mathfrak{B}$ on $\mathcal{N}$ to the original 
algebra $\mathfrak{A}$ of observables on $\mathfrak{A}$ 
and $\BD{\KO_{\gamma}}$ is the corresponding transfer 
(i.e., Perron-Frobenius) operator pushing states forward 
from $\mathcal{M}$ to $\mathcal{N}$.  A similar situation can 
arise in the case of noncommutative algebras, for example 
when $\KO:\mathfrak{B}\to\mathfrak{A}$ is an embedding identifying 
the subalgebra of observables localized on a quantum subsystem 
of interest.  To keep the discussion general, we will assume in 
this section that $\EO(t,s)$ is a strongly continuous 
(or weak-* continuous, in the case of von Neumann algebras) family 
of *-endomorphisms generated by derivations $\LV[t]$ 
on $\mathfrak{A}$ and that $\KO:\mathfrak{B}\to\mathfrak{A}$ is 
a nondegenerate $*$-homomorphism.  We then seek an appropriate 
evolution equation for the reduced state $\sigma(t) = \BD{\KO}\rho(t)$.  
To this end, consider $g_{0}\in\mathfrak{B}$. We have 
\begin{align}
	[\KO g](t)  = & \EO(t,0)\KO g_{0} \nonumber \\ 
	= &\TOR e^{\int_{0} ^{t}\LV[\tau]\,\rmd \tau}(g_{0}\circ\gamma),
\end{align}
and
\begin{align}
	\frac{\rmd}{\rmd t}[\KO g](t) & = \TOR e^{\int_{0}^{t}\LV[\tau]\,\rmd \tau}\LV[t]\KO g_{0} \notag\\
	& = \TOR e^{\int_{0}^{t}\LV[\tau]\,\rmd\tau}\LV[t](g_{0}\circ\gamma).
\end{align}
Then for any state $\rho_{0}\in\mathcal{S}(\mathfrak{A})$
%\begin{subequations}
\begin{align}
	\rho_{0}([\KO g](t)) & = \rho_{0}(e^{t\mathcal{L}}\KO g_{0}) = \rho(t)(\KO g_{0}) = \left[\BD{\KO }\rho(t)\right](g_{0}).
\end{align}
%\end{subequations}
%\begin{subequations}
Therefore, as expected,
\begin{align}
	\frac{\rmd}{\rmd t}\BD{\KO }\rho(t) & = \BD{\KO }\BD{\LV[t]}\rho(t). 
	\label{eqn:rawPushForwardEvolution}
\end{align}
%\end{subequations}
%
This equation is still not 
a reduced-order equation, since the right hand side is 
not in terms of $\sigma_{t}:=\BD{\KO }\rho_{t}$. 
However, we can use the NMZ projection operator method to derive 
the reduced-order equation we are interested in.  To this end, 
let us assume that $\KO:\mathfrak{B}\to\mathfrak{A}$ is injective; 
if not, one can typically replace $\mathfrak{B}$ by $\mathfrak{B}/\ker\KO$ 
and replace $\KO$ by $\tilde{\KO}:\mathfrak{B}/\ker\KO\to\mathfrak{A}$. 
In the case $\KO = \KO_{\gamma}$ is the Koopman morphism of 
a map $\gamma:\mathcal{M}\to\mathcal{N}$, this amounts to 
replacing $\mathcal{N}$ with $\tilde{\mathcal{N}} = \gamma(\mathcal{M})$, i.e. 
the image of $\gamma$, and letting $\mathfrak{B}$ be the appropriate algebra 
of observables on $\tilde{\mathcal{N}}$, say $C_{0}(\tilde{\mathcal{N}})$.  
Since $\KO$ is injective *-morphism, $\KO:\mathfrak{B}\to\mathfrak{A}$ is 
an embedding of $\mathfrak{B}$ into $\mathfrak{A}$.  
In other words, $\hat{\mathfrak{B}}:=\im\KO\subset \mathfrak{A}$ is 
isomorphic to $\mathfrak{B}$ via $\KO$.  Suppose that 
$\Prj:\mathfrak{A}\to\mathfrak{A}$ is a conditional expectation 
onto $\hat{\mathfrak{B}}$.  Then $\Prj$ can be decomposed as the 
composition of two positive contractions: $\Prj = \KO\circ\pi$, where 
$\pi:\mathfrak{A}\to\mathfrak{B}$ may be viewed as the projection 
$\Prj$ onto $\hat{\mathfrak{B}}$, followed by identification of 
$\hat{\mathfrak{B}}$ with $\mathfrak{B}$.  Moreover, it is clear 
that $\pi\circ\KO$ is the identity map on $\mathfrak{B}$, so that 
$\Prj\circ\KO = \KO$, and $\pi\circ\Prj = \pi$.
Using $\Prj$, we get the standard NMZ evolution equation 
for $\BD{\Prj}\rho(t)$:
\begin{align}
\frac{\rmd}{\rmd t}\BD{\Prj}\rho(t) & = \BD{\Prj}\BD{\LV[t]}\BD{\Prj}\rho(t) 
+ \BD{\Prj}\BD{\LV[t]} \TOL e^{\int_{0}^{t}\BD{\PrjC}\BD{\LV[\tau]}\,
\rmd\tau}\BD{\PrjC}\rho_{0} \notag\\
&  + \BD{\Prj}\BD{\LV[t]}\int_{0}^{t}\TOL e^{\int_{s}^{t}\BD{\PrjC}\BD{\LV[\tau]}
\,\rmd\tau}\BD{\PrjC}\BD{\LV[s]}\BD{\Prj} \rho(s)\,\rmd s.
\end{align}
Now, replacing $\BD{\Prj}$ with $\BD{\Prj} = \BD{\pi}\BD{\KO}$, acting 
on the left with $\BD{\KO}$, and using the fact that 
$\BD{\KO}\BD{\Prj} = \BD{\KO}$, we get the desired Langevin 
equation for $\sigma(t) = \BD{\KO}\rho(t)$:
% \begin{subequations}
\begin{align}
\frac{\rmd}{\rmd t}\BD{\KO }\rho(t) & = 
\BD{\KO }\BD{\LV[t]}\BD{\pi}\BD{\KO}\rho(t) + 
\BD{\KO }\BD{\LV[t]} \TOL 
e^{\int_{0}^{t}\BD{\PrjC}\BD{\LV[\tau]}\,\rmd\tau}\BD{\PrjC}\rho_{0} \notag\\
&  + \BD{\KO }\BD{\LV[t]}\int_{0}^{t}\TOL e^{\int_{s}^{t}\BD{\PrjC}\BD{\LV[\tau]}\,
\rmd\tau}\BD{\PrjC}\BD{\LV[s]}\BD{\pi}\BD{\KO} \rho(s)\,\rmd s,\\
\frac{\rmd}{\rmd t}\sigma(t) & = \BD{\KO }\BD{\LV[t]}\BD{\pi}\sigma(t) + 
\BD{\KO }\BD{\LV[t]} \TOL e^{\int_{0}^{t}\BD{\PrjC}\BD{\LV[\tau]}\,\rmd\tau}
\BD{\PrjC}\rho_{0} \notag\\
&  + \BD{\KO }\BD{\LV[t]}\int_{0}^{t}\TOL e^{\int_{s}^{t}\BD{\PrjC}\BD{\LV[\tau]}\,
\rmd\tau}\BD{\PrjC}\BD{\LV[s]}\BD{\pi}\sigma(s)\,\rmd s.\label{eqn:PushForwardStateNMZ}
\end{align}
% \end{subequations}
If $\rho_{0}\in\mathcal{S}(\mathfrak{A})$ is supported on $\hat{\mathfrak{B}}$, then $\rho_{0} = \BD{\pi}\sigma_{0}$ for some $\sigma_{0}\in\mathcal{S}(\mathfrak{B})$.  Then, since $\pi\Prj = \pi$, it follows that $\pi\PrjC = 0$ and $\BD{\PrjC}\BD{\pi} = 0$, so that $\BD{\PrjC}\rho_{0} = \BD{\PrjC}\BD{\pi}\sigma_{0} = 0$ and the ``random noise'' term in the Langevin equation vanishes, leaving
\begin{align}
\frac{\rmd}{\rmd t}\sigma(t) & = \BD{\KO }\BD{\LV[t]}\BD{\pi}\sigma(t)  \notag\\
&  + \BD{\KO }\BD{\LV[t]}\int_{0}^{t}\TOL e^{\int_{s}^{t}\BD{\PrjC}\BD{\LV[\tau]}\,
\rmd\tau}\BD{\PrjC}\BD{\LV[s]}\BD{\pi}\sigma(s)\,\rmd s.
\label{eqn:PushForwardStateNMZReduced}
\end{align}
As we will see in the next example, this equation takes a particularly 
simple form if the dynamics is on a manifold $\mathcal{M}$ with a tensor 
product structure.

\begin{example} Let $(\mathcal{N},\nu)$ 
and $(\mathcal{R},\eta)$  manifolds with Borel measures 
$\nu$ and $\eta$, respectively, and let 
$(\mathcal{M},\mu) = (\mathcal{N},\nu)\times (\mathcal{R},\eta)$, 
i.e. $\mathcal{M} = \mathcal{N}\times \mathcal{R}$ and $\mu$ 
is the product measure $\nu\times \eta$.  Then with 
$\mathfrak{A} = L^{\infty}(\mathcal{M},\mu)$, 
$\mathfrak{B} = L^{\infty}(\mathcal{N},\nu)$ and 
$\mathfrak{R} = L^{\infty}(\mathcal{R},\eta)$, we have 
that $\mathfrak{A} = \mathfrak{B}\otimes \mathfrak{R}$.
Let $\gamma:\mathcal{M}\to\mathcal{N}$ be the map $\gamma(x,y) = x$ 
for $x\in\mathcal{N}$ and $y\in\mathcal{R}$.  Then the composition 
(i.e. Koopman) operator $\KO_{\gamma}:\mathfrak{B}\to\mathfrak{A}$ 
is given by
\begin{equation}
(\KO_{\gamma}g)(x,y) = g(\gamma(x,y)) = g(x) = (g\otimes\identity)(x,y)
\nonumber
\end{equation}
for any $g\in\mathfrak{B}$.  Let $\pi:\mathfrak{A}\to\mathfrak{B}$ be given by
\begin{align}
(\pi f) & = (\id_{\mathfrak{B}}\otimes \rho_{\mathfrak{R}})(f)\nonumber\\
(\pi f)(x) & =\int_{\mathcal{R}}p(y)f(x,y)\,\rmd \eta(y)\nonumber
\end{align}
where $\rho_{\mathfrak{R}}\in\mathcal{S}(\mathfrak{R})$ is a 
normal state and $p$ is the corresponding probability density 
function on $(\mathcal{R}, \eta)$.  Then $\Prj = \KO_{\gamma}\pi$ is 
a conditional expectation on $\mathfrak{A}$ with image isomorphic 
to $\mathfrak{B}$, and $\pi\KO_{\gamma}$ is the identity morphism 
on $\mathfrak{B}$.

It is now straightforward to identity the predual operators 
(with the predual of $L^{\infty}(\mathcal{M},\mu)$ identified 
with $L^{1}(\mathcal{M},\mu)$, and likewise for 
$(\mathcal{N},\nu)$ and $(\mathcal{R},\eta)$)
\begin{align}
(\PD{\KO_{\gamma}}\phi)(x) & = \int_{\mathcal{R}}\phi(x,y)\,\rmd\eta(y),\nonumber\\
(\PD{\pi}\psi)(x,y) &= (\psi\otimes p)(x,y) = \psi(x)p(y),\nonumber\\
(\PD\Prj\phi)(x,y) &= p(y)\int_{\mathcal{R}}\phi(x',y)\,\rmd\eta(x'),\nonumber
\end{align}
for any $\phi\in\PD{\mathfrak{A}}$, $\psi\in\PD{\mathfrak{B}}$.  Then the NMZ equation \eqref{eqn:PushForwardStateNMZReduced} becomes
\begin{align}
& \frac{\rmd}{\rmd t}\sigma_{t}(x)  = \int_{\mathcal{R}}\PD{\LV[t]}(\sigma_{t}\otimes p)(x,y)\,\rmd \eta(y) \notag\\
 & + \int_{\mathcal{R}}\left[\PD{\LV[t]}\int_{0}^{t}\TOL e^{\int_{s}^{t}\PD{\PrjC}\PD{\LV[\tau]}\,\rmd \tau}\PD{\PrjC}\PD{\LV[s]}(\sigma_{s}\otimes p)\,\rmd s\right](x,y)\,\rmd \eta(y).\label{eqn:classicalSubsystemReduction}
\end{align}
\end{example}

\begin{remark}
The projection defined in Example 4 may be thought of 
as a generalization of Chorin's conditional expectation \cite{Chorin2000} 
when $\mathcal{N}\simeq \mathbb{R}^{n}$ and $\mathcal{R}\simeq\mathbb{R}^{r}$, 
with $\nu$ and $\eta$ the Lebesgue measures on these spaces.  
It is also a commutative (i.e., classical) example of the problem 
of reducing dynamics to a subsystem of interest that arises in the theory 
of open quantum systems and quantum information theory.  
\end{remark}

The steps necessary to undertake dimension reduction using the NMZ 
formalism are outlined in Algorithm \ref{alg:NMZ}.  
The last step obviously hides many important 
details involving approximation of memory integrals, noise terms  
and implementation. These details are beyond the scope of the 
present paper and we refer to 
\cite{Stinis2015,Chorin2000,Stinis2007,Venturi2014,Venturi2016} 
(see also \cite{Zhu2016}).  
In general, solving the NMZ equations is a very challenging task 
that implicitly requires propagation of all information of the system.  Only by 
suitable (typically problem-class-dependent) approximations and efficient 
numerical algorithms, can these equations be rendered tractable.  Except under 
strong assumptions (e.g. scale-separation), these issues persist and 
present serious challenges to the development of efficient and accurate 
solution methods.
\begin{algorithm}[H]
\caption{NMZ algorithm}\label{alg:NMZ}
\begin{algorithmic}[1]
% \Procedure{MyProcedure}{}
\State Identify the system $\dot{x} = F(x,\xi,t).$
\State Decide on the algebra of observables, 
typically either $\mathfrak{A} = L^\infty(\mathcal{M},\mu)$ or 
$\mathfrak{A} = C_{0}(\mathcal{M})$.
\State Identify the subalgebra of interest 
$\mathfrak{B} \subset\mathfrak{A}$.
\State Choose a faithful tracial state $\rho$ of $\mathfrak{A}$.  This is typically equivalent to choosing a strictly positive probability measure $\hat{\rho}$ on $\mathcal{M}$.
\State Extract a conditional expectation $\Prj$ as the unique 
projection $\Prj:\mathfrak{A}\to\mathfrak{B}\subset\mathfrak{A}$ 
satisfying $\rho(AB) = \rho(\Prj(A)B)$ for all $A\in\mathfrak{A}$ 
and $B\in\mathfrak{B}$.  
In the typical case, this is 
$$\int_{\mathcal{M}}A(x)B(x)\,d\hat{\rho}(x)  = 
\int_{\mathcal{M}}\Prj(A)(x)B(x)\,d\hat{\rho}(x).$$
\State Set up the NMZ equations \eqref{eqn:NMZforObservables} and/or \eqref{eqn:NMZforStates}.
\State Solve, by approximating the memory integral and/or noise term as needed.
\end{algorithmic}
\end{algorithm}
In the next section we study an example of the NMZ formalism 
applied to quantum systems, in which the observable algebra 
is non-commutative.

\subsection{Reduced-Order Quantum Dynamics}
\label{sec:QuantumExample}
Let $\HA$ and $\HB$ be Hilbert spaces for two quantum systems.  
Then the total Hilbert space for the two systems together 
is $\HAB$.  We will look at the case where we know the 
Hamiltonian dynamics of the composite system, but want to 
understand the dynamics of system $A$ alone.  This is a 
common starting point for the study of open quantum systems,
where system $B$ represents the (typically nuisance) environment 
from which we cannot entirely decouple our system $A$ of interest.  
Let $\mathfrak{A} = \mathcal{B}(\HAB)$ and 
$\mathfrak{B} = \mathcal{B}(\HA)$, 
where $\mathcal{B}(\HH)$ denotes the (noncommutative) von Neumann 
algebra of all bounded linear operators on $\HH$ with the operator
norm $\|X\|_{\infty} = \sup_{\|v\|=1}\|Xv\|$.  Note that that 
predual of $\mathcal{B}(\HH)$ is $\mathcal{T}(\HH)$, the Banach 
space of trace class bounded linear operators on $\HH$ with 
norm $\|A\|_{1} = \Tr|A|$, where $|A| = \sqrt{A^{\dag} A}$.  
As suggested by the definitions (and subscripts), $\mathcal{B}(\HH)$ 
and $\mathcal{T}(\HH)$ are in some ways analogous to $L^{\infty}$ 
and $L^{1}$ function spaces.

We consider the isometric $*$-homomorphism $\KO:\mathfrak{B}\to\mathfrak{A}$ 
given by $\KO(X) = X\otimes\identity$.  The predual of $\KO$ is 
then $\PD{\KO}(Y) = \TrB(Y)$, the partial trace of $Y$ over $\HB$. 
We seek a pseudoinverse of $\KO$, i.e. a linear
map $\pi:\mathfrak{A}\to\mathfrak{B}$ such that $\KO\pi\KO = 
\KO$ and $\pi \KO \pi = \pi$.  It is easily verified that, for 
any fixed $\rhoB\in\mathcal{S}(\mathcal{B}(\HB))$,  
$\pi(X) = \TrB(X\identity\otimes\rhoB)$ is such a pseudoinverse.  
Indeed, with this choice, $\pi\KO = \id_{\mathfrak{B}}$.  
This choice brings us in line with [\onlinecite[\S 9.1]{Breuer2002}].  
Then $\PD{\pi}:\PD{\mathfrak{A}}\to\PD{\mathfrak{B}}$ is 
$\PD{\pi}(A) = A\otimes\rhoB$.  Let $\Prj$ be the conditional
expectation $\KO \pi$ on $\mathfrak{A}$, so that 
$\Prj(X) = \TrB(X\identity\otimes\rhoB)\otimes\identity$ projects 
onto the subalgebra $\mathfrak{B}\otimes\identity\subset\mathfrak{A}$, 
and $\PD{\Prj}(A) = \TrB(A)\otimes\rhoB$ projects onto 
$\PD{\mathfrak{B}}\otimes\rhoB\subset\PD{\mathfrak{A}}$.
Then, letting $\sigma(t) = \PD{\KO}\rho(t)$, \eqref{eqn:PushForwardStateNMZ} yields:
\begin{align}
& \frac{\rmd}{\rmd t}\sigma(t) = \TrB\big[\PD{\LF[t]}\sigma(t)\otimes\rhoB\big] \notag\\
& \quad + \TrB\left[\PD{\LF[t]} \TOL e^{\int_{0}^{t}\PD{\PrjC}\PD{\LF[\tau]}\,\rmd\tau}\PD{\PrjC}\rho_{0}\right] \notag\\
& \quad + \TrB\left[\PD{\LF[t]}\int_{0}^{t}\TOL e^{\int_{s}^{t}\PD{\PrjC}\PD{\LF[\tau]}\,\rmd\tau}\PD{\PrjC}\PD{\LF[s]}\sigma(s)\otimes\rhoB\,\rmd s\right].
\end{align}
This generalized Langevin equation should be compared with the 
classical subsystem reduction in \eqref{eqn:classicalSubsystemReduction}.  In many cases, further assumptions and approximations are made in order to eliminate the memory, yielding a Markovian master equation \cite{Gorini1976}.

\section{Application to Integrable Systems}
\label{sec:AnalyticExample}
In this section we study the dynamics of simple integrable systems
on $\mathrm{SU}(2)$ and $\mathrm{SO}(3)$ and work through an analytic example in both the observable (Heisenberg) and state (Schr\"odinger) 
pictures.  

\subsection{Integrable System on $\mathrm{SU}(2)$: The Heisenberg Picture}

Let $\mathcal{M} = \mathrm{SU}(2)$ with normalized Haar 
measure $\mu$, and consider the observable 
$g(U) = \Tr(U)/2$ in the Banach algebra 
$\mathfrak{A} = L^{\infty}(\mathrm{SU}(2),\mu)$.  
Then $g = g^{\ast}$ ($g$ is real-valued) and $\|g\| = 1$.  
Consider the dynamical system
\begin{equation}
	\dot{U} = -iHU,
	\label{eqn:ExampleSU2ODE}
\end{equation}
where $H\in\mathbb{C}^{2\times2}$ is Hermitian 
and $\Tr(H) = 0$, so that, by the Cayley-Hamilton 
theorem, $H^{2} = -\identity \det{H}$.  
While \eqref{eqn:ExampleSU2ODE} 
resembles a quantum mechanical model of a 2-state 
system in some respects, we will not think of it in those terms, 
but only as a simple ODE defined on a 3-dimensional compact 
manifold. In particular, the observable algebra we consider hereafter 
is a commutative algebra of $\mathbb{C}$-valued functions 
on $\mathrm{SU}(2)$, rather than the non-commutative algebra $\mathcal{B}(\mathbb{C}^{2})$. 
Then for any differentiable $f\in L^{\infty}(\mathrm{SU}(2),\mu)$,
\begin{equation}
(\LV f)(U) = \rmd f(-iHU).
\end{equation}
Recall that the class functions on $\mathrm{SU}(2)$ are 
those functions $f:\mathrm{SU}(2)\to\mathbb{C}$ such 
that $f(U) = f(\Omega U \Omega^{\dag})$ for all 
$\Omega\in\mathrm{SU}(2)$, i.e. the functions that are 
constant on conjugacy classes.  Let $\mathfrak{B}\subset\mathfrak{A}$ 
be the von Neumann subalgebra of class functions 
within $L^{\infty}(\mathrm{SU}(2),\mu)$.  This subalgebra 
captures the observables on $\mathrm{SU}(2)$ that depend only 
on the spectra of the unitary operators.  And, since in 
$\mathrm{SU}(2)$ the spectrum of $U$ is determined by $\Tr(U)$, 
the class functions are exactly the functions that factor over $g$, 
so that $\mathfrak{B}$ is the von Neumann subalgebra 
of $\mathfrak{A}$ generated by $g$.
We take as the projection operator onto $\mathfrak{B}$ the 
conditional expectation
\begin{equation}
(\Prj f)(U) = \int_{\mathrm{SU}(2)}f(\Omega U\Omega^{\dag})\,\rmd\mu(\Omega).
\label{eq:proj}
\end{equation}
The NMZ equation for $g$ is  
\begin{align}
	\frac{d}{dt}g(t) & = e^{t\LV}\Prj\LV g + R(t) + 
	\int_{0}^{t} e^{(t-s)\LV} \Prj\LV R(s)\,\rmd s
\end{align}
where
\begin{equation}
R(t) = e^{t(\identity-\Prj)\LV}(\identity - \Prj)\LV g.
\end{equation}
To begin making this MZ equation more concrete, 
observe first that
\begin{align}
(\LV g)(U) &= -\frac{i}{2}\Tr(HU)\nonumber\\
(\LV^{2}g)(U) &= -\frac{1}{2}\Tr(H^{2}U) = \det(H)g(U),\nonumber
\end{align}
so that $g$ and $\LV g$ span an invariant subspace of $\LV$.
Moreover, $g$ is a class function of $\mathrm{SU}(2)$, so $\Prj g = g$, 
and 
\begin{align}
(\Prj \LV g)(U) & = \int_{\mathrm{SU}(2)}
(\LV g)(\Omega U\Omega^{\dag})\,\rmd\mu(\Omega)\nonumber\\
& = -\frac{i}{2}\int_{\mathrm{SU}(2)}\Tr(H\Omega U\Omega^{\dag})\,
\rmd\mu(\Omega)\nonumber\\
& = -\frac{i}{2}\Tr\left[\left(\int_{\mathrm{SU}(2)}\Omega^{\dag}
H\Omega \,\rmd\mu(\Omega)\right)U\right]\nonumber\\
& = -i\Tr(H)g(U) = 0
\end{align}
for all $U\in\mathrm{SU}(2)$ since $\Tr(H) = 0$.
Then $\Span\{g,\LV g\}$ is also an invariant subspace of $\Prj$ 
(and therefore also of $\identity - \Prj$).
It follows that
\begin{equation}
\big[(\identity-\Prj)\LV \big]^{k}g = 
\begin{cases} 
g & k = 0\\\LV g  
&  k=1\\0 & k\geq 2
\end{cases},
\end{equation}
and therefore
\begin{align}
R(t) & = e^{t(\identity - \Prj)\LV}(\identity - \Prj)\LV g \nonumber\\
&= \sum_{k=0}^{\infty}\frac{t^{k}}{k!}\big[(\identity-\Prj)\LV\big]^{k+1}g \notag\\
& = \LV g.
\end{align}
This means that $R(t)(U_{0}) = R(U_{0}) = -i\Tr(HU_{0})$ 
is constant in time and that $\Prj \LV R = \det(H) g$.
The NMZ equation then reduces to
\begin{equation}
\frac{d}{dt}g(t) = R + \det(H)\int_{0}^{t}g(t-s)\,\rmd s.
\label{eqn:ExampleMZResult}
\end{equation}
Now, since the ODE \eqref{eqn:ExampleSU2ODE} is linear, 
we can solve it exactly.  This is particularly simple since, 
as we observed above, $H^{2} = -\det(H)\identity$.  
We therefore immediately get
\begin{align}
U(t) & = e^{-itH}U_{0} = \cos(\lambda t)U_{0} - i\frac{\sin(\lambda t)}{\lambda}HU_{0},
%& = \begin{bmatrix}\cosh(\sqrt{2}t)x_{1}(0) + \frac{\sinh{\sqrt{2}t}}{\sqrt{2}}\big(x_{1}(0) + x_{2}(0)\big)\\
%\cosh(\sqrt{2}t)x_{2}(0) + \frac{\sinh{\sqrt{2}t}}{\sqrt{2}}\big(x_{1}(0) - x_{2}(0)\big)
%\end{bmatrix}.
\end{align}
where $\lambda = \sqrt{-\det{H}}$, so that $\sigma(H) = \{\pm\lambda\}$.
Thus,
\begin{align}
	g(t) &= \cos(\lambda t)g+ \frac{\sin(\lambda t)}{\lambda}\LV g.
\label{eq:sol}
\end{align}
It is easy to show by direct substitution that \eqref{eq:sol} 
is indeed the solution to the integro-differential 
equation \eqref{eqn:ExampleMZResult}, as desired.

\subsection{Integrable System on $\mathrm{SU}(2)$: The Schr\"odinger Picture}
We now turn to the problem of solving the 
predual NMZ equation for the evolution of a reduced normal state. 
Note that in the present example, this is a classical reduced-order 
probability distribution function on $\mathrm{SU}(2)$, not a density 
matrix as would be typical in a quantum mechanical setting.
We  consider the subalgebra $\mathfrak{B}\subset\mathfrak{A}$ 
of bounded class functions, i.e. $f\in\mathfrak{A}$ such 
that $f(\Omega U\Omega^{\dag}) = f(U)$ for 
all $U, \Omega\in \mathrm{SU}(2)$.  The projector \eqref{eq:proj}
$\Prj\in\mathcal{B}(\mathfrak{A})$ given in \eqref{eq:proj} 
has predual $\PD{\Prj}\in\mathcal{B}(\PD{\mathfrak{A}})$ with 
the same form, i.e., 
\begin{equation}
(\PD{\Prj}\rho)(U) = \int_{\mathrm{SU}(2)}\rho(\Omega^{\dag}U\Omega)\,\rmd\mu(\Omega).
\end{equation} 
Here, and throughout this section, we'll freely use the isomorphism 
$\PD{L^{\infty}(\mathcal{M},\mu)}\simeq L^{1}(\mathcal{M},\mu)$ to identify 
functionals in $\PD{\mathfrak{A}}$ with $\mu$-integrable functions.  Likewise, 
the predual Liouvillian $\PD{\LV}$ takes almost the same form as $\LV$, namely
\begin{equation}
(\PD{\LV}\rho)(U) = \rmd_{U}\rho(iHU).
\end{equation}
Now, suppose we take as initial state the PDF $\rho_{0}(U) = \Tr(U)^{2}$.  This is positive valued on $\mathrm{SU}(2)$ because $\Tr$ is real-valued on $\mathrm{SU}(2)$, and it is normalized because
\begin{align}
\int_{\mathrm{SU}(2)}\Tr(U)^{2}\,\rmd\mu(U) & = \Tr\left[\int_{\mathrm{SU}(2)}U\otimes U\,\rmd\mu(U)\right] \notag\\
& = \Tr[\frac{1}{2}(\identity - \textsc{swap})] = 1,
\end{align}
where $\textsc{swap}$ is the operator on $\HH\otimes\HH$ that 
permutes the two subsystems, i.e., 
$\textsc{swap}\big(|\psi\rangle\otimes|\phi\rangle\big) = 
|\phi\rangle\otimes\psi\rangle$.
Because $\PD{\Prj}\rho_{0} = \rho_{0}$, the NMZ equation \eqref{eqn:NMZforStates} 
that we wish to solve reduces to
\begin{align}
\label{eqn:UnitaryStateExampleNMZ}
	\frac{\rmd}{\rmd t}\PD{\Prj}\rho(t) &= \PD{\Prj}\PD{\LV}\PD{\Prj}\rho(t) \notag\\
	& +  \PD{\Prj}\PD{\LV}\int_{0}^{t} e^{(t-s)\PD{\PrjC}\PD{\LV}}\PD{\PrjC}\PD{\LV}\PD{\Prj}\rho(s)\,\rmd s.
\end{align}
Next, we look for suitable matrix representations
of $\PD{\Prj}$ and $\PD{\LV}$. To this end, 
consider the linearly independent family of 
functions 
\begin{equation}
\{1,\rho_{0} = 4g^{2},4g\PD{\LV}g, 4(\PD{\LV}g)^{2}\},
\label{family}
\end{equation}
where the observable $g(U) = \Tr(U)/2$ is as in the previous section.  
It easy to show that the space spanned by these functions is 
invariant for $\PD{\LV}$ and $\PD{\Prj}$. 
It can also be verified that $\PD{\LV}$ and $\PD{\Prj}$ have the following matrix 
representations relative to \eqref{family}
\begin{align}
\PD{\LV} & \simeq \begin{bmatrix}0 & 0 & 0 & 0\\
0 & 0 & -\lambda^{2} & 0 \\ 0 & 2 & 0 & -2\lambda^{2}\\ 
0 & 0 & 1 & 0\end{bmatrix}\\
\PD{\Prj} & \simeq \begin{bmatrix}1 & 0 & 0 & \frac{4}{3}\lambda^{2}\\
0 & 1 & 0 & -\frac{1}{3}\lambda^{2} \\ 
0 & 0 & 0 & 0\\ 0 & 0 & 0 & 0\end{bmatrix}. %\\
\end{align}
Therefore,  
\begin{widetext}
\begin{align}
\PD{\Prj}\PD{\LV} & \simeq \begin{bmatrix}0 & 0 & \frac{4}{3}\lambda^{2} & 0\\ 
0 & 0 & -\frac{4}{3}\lambda^{2} & 0\\ 0 & 0 & 0 & 0\\ 0 & 0 & 0 &0\end{bmatrix},\\
\PD{\Prj}\PD{\LV}e^{(t-s)\PD{\PrjC}\PD{\LV}}\PD{\PrjC}\PD{\LV} & \simeq \begin{bmatrix}0 & \frac{8}{3}\lambda^{2}\cos\left(\frac{2\lambda(t-s)}{\sqrt{3}}\right) & \frac{8}{3\sqrt{3}}\lambda^{3}\sin\left(\frac{2\lambda(t-s)}{\sqrt{3}}\right) & -\frac{8}{3}\lambda^{4}\cos\left(\frac{2\lambda(t-s)}{\sqrt{3}}\right)\\
0 & -\frac{8}{3}\lambda^{2}\cos\left(\frac{2\lambda(t-s)}{\sqrt{3}}\right) & -\frac{8}{3\sqrt{3}}\lambda^{3}\sin\left(\frac{2\lambda(t-s)}{\sqrt{3}}\right) & \frac{8}{3}\lambda^{4}\cos\left(\frac{2\lambda(t-s)}{\sqrt{3}}\right)\\
 0 & 0 & 0 & 0\\ 0 & 0 & 0 &0\end{bmatrix}.
\end{align}
\end{widetext}
Since $\PD{\Prj}\rho(0) \simeq \begin{bmatrix}0 & 1 & 0 & 0\end{bmatrix}^{\rmT}$,
it follows that $\PD{\Prj}\rho(t) = a(t)1 + b(t)\rho_{0}$, so we can 
reduce to the 2-dimensional invariant subspace spanned by $1$ 
and $\rho_{0}$, yielding
\begin{align}
\PD{\Prj}\PD{\LV} & \simeq \begin{bmatrix}0 & 0 \\ 0 & 0\end{bmatrix},\\
\PD{\Prj}\PD{\LV}e^{(t-s)\PD{\PrjC}\PD{\LV}}\PD{\PrjC}\PD{\LV} & \simeq \begin{bmatrix}0 & \frac{8}{3}\lambda^{2}\cos\left(\frac{2\lambda(t-s)}{\sqrt{3}}\right)\\
0 & -\frac{8}{3}\lambda^{2}\cos\left(\frac{2\lambda(t-s)}{\sqrt{3}}\right)\end{bmatrix}.
\end{align}
The NMZ equation \eqref{eqn:UnitaryStateExampleNMZ} then becomes
\begin{equation}
\begin{bmatrix}a'(t)\\b'(t)\end{bmatrix} = \frac{8}{3}\lambda^{2}\int_{0}^{t}\cos\left(\frac{2\lambda(t-s)}{\sqrt{3}}\right)\begin{bmatrix}0 & 1\\0 & -1\end{bmatrix}\begin{bmatrix}a(s)\\b(s)\end{bmatrix}\,\rmd s.
\end{equation}
Note that $a(0)+b(0) = 1$ (since $a(0) = 0$ and $b(0) = 1$) 
and $a'(t)+b'(t) = 0$, so that $a(t)+b(t) = 1$ for all $t\geq 0$. 
Moreover, $b(t)$ is described by the integro-differential equation
\begin{equation}
b'(t) = -\frac{8}{3}\lambda^{2}\int_{0}^{t}\cos\left(\frac{2\lambda(t-s)}{\sqrt{3}}\right)b(s)\,\rmd s.
\end{equation}
Differentiating this expression twice more, we find that
\begin{align}
b''(t) & = -\frac{8}{3}b(t) -\frac{16}{3\sqrt{3}}\lambda^{3}
\int_{0}^{t}\sin\left(\frac{2\lambda(t-s)}{\sqrt{3}}\right)b(s)\,\rmd s\nonumber\\
b'''(t) & = -\frac{8}{3}b'(t) -\frac{32}{9}\lambda^{4}\int_{0}^{t}
\cos\left(\frac{2\lambda(t-s)}{\sqrt{3}}\right)b(s)\,\rmd s\nonumber\\
& = -4\lambda^{2}b'(t),
\end{align}
so that
%\begin{subequations}
\begin{align}
%b'(t) & = \alpha\sin(2\lambda t) + \beta\cos(2\lambda t)\\
b(t) & = C + \gamma\sin(2\lambda t) + \kappa\cos(2\lambda t).
\end{align}
%\end{subequations}
Using the initial conditions $b(0) = 1$, $b'(0) = 0$, $b''(0) = 
-8\lambda^{2}/3$ (the last two are clear from the integro-differential 
equations for $b'$ and $b''$ above), we conclude that
\begin{equation}
b(t) = \frac{1}{3}(2\cos(2\lambda t) + 1)
\end{equation}
so that
\begin{align}
\PD{\Prj}\rho(t) & = \frac{2}{3}(1-\cos(2\lambda t))1 + 
\frac{1}{3}(2\cos(2\lambda t) + 1)\rho_{0} \nonumber\\
& = \rho_{0} + \frac{2}{3}(1-\cos(2\lambda t))(1-\rho_{0}).
\end{align}
This solution can be also obtained by exponentiating $\PD{\LV}$ in the 
4-dimensional invariant subspace, yielding
\begin{align}
\PD{\Prj}e^{t\PD{\LV}} & \simeq \begin{bmatrix}1 
& \frac{4}{3}\sin^{2}(\lambda t) & 
\frac{2}{3}\lambda \sin(2\lambda t) & \frac{4}{3}\lambda^{2}\cos^{2}(\lambda t)\\ 
0 & \frac{1+2\cos(2\lambda t)}{3} & -\frac{2}{3}\lambda\sin(2\lambda t) & 
\lambda^{2}\frac{1-2\cos(2\lambda t)}{3}\\0 & 0 & 0 & 0\\ 0 & 0 & 0 &0
\end{bmatrix}.\nonumber
\end{align}

\subsection{Integrable System on $\mathrm{SO}(3)$: The Heisenberg Picture}
Let $\mathcal{M} = \mathrm{SO}(3)$ with normalized Haar 
measure $\mu$, and consider the observable 
$g(\mathcal{O}) = \Tr(\mathcal{O})/3$ in the Banach algebra 
$\mathfrak{A} = L^{\infty}(\mathrm{SO}(3),\mu)$.  
Then $g = g^{\ast}$ ($g$ is real-valued) and $\|g\| = 1$.  
Consider the dynamical system
\begin{align}
	\frac{d}{dt}{\mathcal{O}}(t) & = X\mathcal{O}(t), % & \mathcal{O}(0) & = \identity ,
	\label{eqn:ExampleSO3ODE}
\end{align}
where $X\in\mathrm{SO}(3)$ is skew-symmetric of the form
\begin{align}
	X = \begin{bmatrix}0 & -z & y\\ z & 0 & -x\\-y & x & 0\end{bmatrix}
\end{align}
with $\Tr(X) = 0$ and $\det(X) = 0$, so that, by the Cayley-Hamilton 
theorem, $X^{3} = -r^{2}X$, where $r^{2} = x^{2}+y^{2}+z^{2}$.  The 
dynamical system \eqref{eqn:ExampleSO3ODE} is then a 
simple ODE on a 3-dimensional compact manifold, which describes a one-
parameter semigroup of orthogonal operators $\mathcal{O}(t)$ effecting a 
constant-rate rigid rotation about the axis along $(x,y,z)$.  Then for any 
differentiable $f\in L^{\infty}( \mathrm{SO}(3),\mu)$,
\begin{equation}
(\LV f)(\mathcal{O}) = \rmd f(X\mathcal{O}).
\end{equation}
Recall that the class functions on $\mathrm{SO}(3)$ are 
those functions $f:\mathrm{SO}(3)\to\mathbb{C}$ such 
that $f(\mathcal{O}) = f(\Omega \mathcal{O} \Omega^{\rmT})$ for all 
$\Omega\in\mathrm{SO}(3)$, i.e. the functions that are 
constant on conjugacy classes.  Let $\mathfrak{B}\subset\mathfrak{A}$ 
be the von Neumann subalgebra of class functions 
within $L^{\infty}(\mathrm{SO}(3),\mu)$.  This subalgebra 
captures the observables on $\mathrm{SO}(3)$ that depend only 
on the spectra of the orthogonal operators.  Moreover, since in 
$\mathrm{SO}(3)$ the spectrum of $\mathcal{O}$ is determined by $\Tr(\mathcal{O})$, the class functions are exactly the 
functions that factor over $g$, 
so that $\mathfrak{B}$ is the von Neumann subalgebra 
of $\mathfrak{A}$ generated by $g$.
We take as the projection operator onto $\mathfrak{B}$ the 
conditional expectation
\begin{equation}
(\Prj f)(\mathcal{O}) = \int_{\mathrm{SO}(3)}f(\Omega \mathcal{O}\Omega^{\rmT})\,\rmd\mu(\Omega).
\label{eq:proj}
\end{equation}
The NMZ equation for $g$ is  
\begin{align}
	\frac{d}{dt}g(t) & = e^{t\LV}\Prj\LV g + R(t) + 
	\int_{0}^{t} e^{(t-s)\LV} \Prj\LV R(s)\,\rmd s
\end{align}
where
\begin{equation}
R(t) = e^{t(\identity-\Prj)\LV}(\identity - \Prj)\LV g.
\end{equation}
To make this NMZ equation more concrete, 
we first observe that
\begin{align}
(\LV g)(\mathcal{O}) &= \Tr(X\mathcal{O})/3,\notag\\
(\LV^{2}g)(\mathcal{O}) &= \Tr(X^{2}\mathcal{O})/3,\\
(\LV^{3}g)(\mathcal{O}) &= \Tr(X^{3}\mathcal{O})/3 = -r^{2}\Tr(X\mathcal{O})/3 = -r^{2}(\LV g)(\mathcal{O}),\notag
\end{align}
so that $g$, $\LV g$, and $\LV^{2}g$ span an invariant subspace of $\LV$ and
\begin{align}
	\LV^{k}g & = \begin{cases}g & k = 0\\(-r^{2})^{\frac{k-1}{2}}\LV g & k \text{ odd}, k\geq 1\\(-r^{2})^{\frac{k-2}{2}}\LV^{2} g & k \text{ even}, k\geq 2.\end{cases}
\end{align}
Moreover, $g$ is a class function of $\mathrm{SO}(3)$, and therefore 
we have $\Prj g = g$ and 
\begin{align}
(\Prj \LV^{k} g)(\mathcal{O}) & = \int_{\mathrm{SO}(3)}
(\LV^{k} g)(\Omega \mathcal{O}\Omega^{\rmT})\,\rmd\mu(\Omega)\nonumber\\
& = \frac{1}{3}\int_{\mathrm{SO}(3)}\Tr(X^{k}\Omega \mathcal{O}\Omega^{\rmT})\,
\rmd\mu(\Omega)\nonumber\\
& = \frac{1}{3}\Tr\left[\left(\int_{\mathrm{SO}(3)}\Omega^{\rmT}
X^{k}\Omega \,\rmd\mu(\Omega)\right)\mathcal{O}\right]\nonumber\\
& = \frac{\Tr(X^{k})}{3}g(\mathcal{O})
\end{align}
for all $\mathcal{O}\in\mathrm{SO}(3)$ and $k\geq 1$.  Since $\Tr(X) = 0$ and $\Tr(X^{2}) = -2r^{2}$, 
\begin{align}
	\Prj\LV^{k}g & = \begin{cases}g & k = 0\\0 & k \text{ odd}, k\geq 1\\\frac{2}{3}(-r^{2})^{\frac{k}{2}} g & k \text{ even}, k\geq 2.\end{cases}
\end{align}
Thus, $\Span\{g,\LV g, \LV^{2}g\}$ is also an invariant subspace of $\Prj$ 
(and therefore also of $\identity - \Prj$). This implies that 
\begin{align}
	[(\identity - \Prj)\LV]^{k} g & = \begin{cases}g & k = 0\\(-r^{2}/3)^{\frac{k-1}{2}}\LV g & k \text{ odd}, k\geq 1 \\(-r^{2}/3)^{\frac{k-2}{2}}(\LV^{2} g + (2/3)r^{2}g) & k \text{ even}, k \geq 2.\end{cases}
\end{align}
and therefore
\begin{align}
	R(t) &= e^{t(\identity - \Prj)\LV}(\identity - \Prj)\LV g\nonumber\\
%	& = \sum_{k=0}^{\infty}\frac{t^{k}}{k!}[(\identity - \Prj)\LV]^{k+1}g\nonumber\\
	& = \sum_{k=0}^{\infty}\frac{t^{2k}}{(2k)!}[(\identity - \Prj)\LV]^{2k+1}g + \sum_{k=0}^{\infty}\frac{t^{2k+1}}{(2k+1)!}[(\identity - \Prj)\LV]^{2k+2}g\nonumber\\
%	& = \sum_{k=0}^{\infty}\frac{t^{2k}}{(2k)!}(-r^{2}/3)^{k}\LV g + \sum_{k=0}^{\infty}\frac{t^{2k+1}}{(2k+1)!}(-r^{2}/3)^{k}(\LV^{2} g + (2/3)r^{2}g)\nonumber\\
	& = \sum_{k=0}^{\infty}(-1)^{k}\frac{(rt/\sqrt{3})^{2k}}{(2k)!}\LV g + \frac{\sqrt{3}}{r}\sum_{k=0}^{\infty}(-1)^{k}\frac{(rt/\sqrt{3})^{2k+1}}{(2k+1)!}\left(\LV^{2} g + \frac{2}{3}r^{2}g\right)\nonumber\\
	& = \cos\left(\frac{rt}{\sqrt{3}}\right)\LV g + \frac{\sqrt{3}}{r}\sin\left(\frac{rt}{\sqrt{3}}\right)\left(\LV^{2} g + \frac{2}{3}r^{2}g\right).
\end{align}
By applying $\Prj \LV$ to $R(s)$, we obtain
\begin{align}
	\Prj \LV R(s) & = -\frac{2}{3}r^{2}\cos\left(\frac{rs}{\sqrt{3}}\right)g.
\end{align}
Thus, the NMZ equation then reduces to
\begin{equation}
\frac{d}{dt}g(t) = R(t) -\frac{2}{3}r^{2}\int_{0}^{t}\cos\left(\frac{rs}{\sqrt{3}}\right)g(t-s)\,\rmd s,
\label{eqn:ExampleMZResult}
\end{equation}
which may be solved (e.g., via Laplace transforms) to obtain
\begin{align}
g_{t} =  g + \frac{\sin(rt)}{r}\LV g +\frac{1-\cos(rt)}{r^{2}}\LV^{2} g.
\label{eq:SO3sol}
\end{align}
Of course, since the ODE \eqref{eqn:ExampleSO3ODE} is linear, 
we can solve it exactly.  This is particularly simple since, 
as we observed above, $X^{3} = -r^{2}X$.  
We therefore find
\begin{align}
	\mathcal{O}(t) & = e^{tX}\mathcal{O}_{0}\nonumber\\
%	& = \sum_{k=0}^{\infty}\frac{t^{k}}{k!}X^{k}\mathcal{O}_{0}\nonumber\\
	& = \sum_{k=0}^{\infty}\frac{t^{2k}}{(2k)!}X^{2k}\mathcal{O}_{0} + \sum_{k=0}^{\infty}\frac{t^{2k+1}}{(2k+1)!}X^{2k+1}\mathcal{O}_{0}\nonumber\\
%	& = \mathcal{O}_{0} + \sum_{k=0}^{\infty}\frac{t^{2k+1}}{(2k+1)!}(-r^{2})^{k}X\mathcal{O}_{0} + \sum_{k=1}^{\infty}\frac{t^{2k}}{(2k)!}(-r^{2})^{k-1}X^{2}\mathcal{O}_{0}\nonumber \\
	& = \mathcal{O}_{0} + \frac{1}{r} \sum_{k=0}^{\infty}(-1)^{k}\frac{(rt)^{2k+1}}{(2k+1)!}X\mathcal{O}_{0} - \frac{1}{r^{2}} \sum_{k=1}^{\infty}(-1)^{k}\frac{(rt)^{2k}}{(2k)!}X^{2}\mathcal{O}_{0}\nonumber\\
	& = \mathcal{O}_{0} + \frac{\sin(rt)}{r} X\mathcal{O}_{0} + \frac{1- \cos(rt)}{r^{2}} X^{2}\mathcal{O}_{0}.
\end{align}
Thus,
\begin{align}
	g(t)(\mathcal{O}_{0}) &= \Tr(\mathcal{O}(t))/3\nonumber\\
	& = \Tr(\mathcal{O}_{0})/3 + \frac{\sin(rt)}{r} \Tr(X\mathcal{O}_{0})/3 + \frac{1- \cos(rt)}{r^{2}} \Tr(X^{2}\mathcal{O}_{0})/3\nonumber\\
	& = g(\mathcal{O}_{0}) + \frac{\sin(rt)}{r} (\LV g)(\mathcal{O}_{0}) + \frac{1- \cos(rt)}{r^{2}} (\LV^{2} g)(\mathcal{O}_{0}),
\label{eq:sol}
\end{align}
confirming the solution obtained through the NMZ formalism (compare \eqref{eq:sol} and \eqref{eq:SO3sol}).

\subsection{Integrable System on $\mathrm{SO}(3)$:  The Schr\"odinger Picture}
We now turn to the problem of solving the 
predual NMZ equation for the evolution of a reduced normal state. 
We consider the subalgebra $\mathfrak{B}\subset\mathfrak{A}$ 
of bounded class functions, i.e. $f\in\mathfrak{A}$ such 
that $f(\Omega \mathcal{O}\Omega^{\rmT}) = f(\mathcal{O})$ for 
all $\mathcal{O}, \Omega\in \mathrm{SO}(3)$.  The projector \eqref{eq:proj}
$\Prj\in\mathcal{B}(\mathfrak{A})$ given in \eqref{eq:proj} 
has predual $\PD{\Prj}\in\mathcal{B}(\PD{\mathfrak{A}})$ with 
the same form, i.e., 
\begin{equation}
(\PD{\Prj}\rho)(\mathcal{O}) = \int_{\mathrm{SO}(3)}\rho(\Omega^{\rmT}\mathcal{O}\Omega)\,\rmd\mu(\Omega).
\end{equation} 
Here, and throughout this section, we'll freely use the isomorphism $\PD{L^{\infty}(\mathcal{M},\mu)}\simeq L^{1}(\mathcal{M},\mu)$ to identify functionals in $\PD{\mathfrak{A}}$ with $\mu$-integrable functions.  Likewise, the predual Liouvillian $\PD{\LV}$ takes almost the same form as $\LV$, namely
\begin{equation}
(\PD{\LV}\rho)(\mathcal{O}) = \rmd_{\mathcal{O}}\rho(X\mathcal{O}).
\end{equation}
Now, suppose we take as initial state the PDF $\rho_{0}(\mathcal{O}) = \Tr(\mathcal{O})^{2}$.  This is positive valued on $\mathrm{SO}(3)$ because 
the trace operator is real-valued on $\mathrm{SO}(3)$, and it is normalized
\begin{align}
\int_{\mathrm{SO}(3)}\Tr(\mathcal{O})^{2}\,\rmd\mu(\mathcal{O}) & = \Tr\left[\int_{\mathrm{SO}(3)}\mathcal{O}\otimes \mathcal{O}\,\rmd\mu(\mathcal{O})\right] = 1.
\end{align}
This follows from the fact that 
\begin{align}
\int_{\mathrm{SO}(3)}\mathcal{O}\otimes \mathcal{O}\,\rmd\mu(\mathcal{O})
\end{align}
is the orthogonal projection onto the subspace spanned by
\begin{align}
	\mathbf{e}_{1}\otimes\mathbf{e}_{1} + \mathbf{e}_{2}\otimes\mathbf{e}_{2} + \mathbf{e}_{3}\otimes\mathbf{e}_{3}.
\end{align}
The NMZ equation \eqref{eqn:NMZforStates} for the PDF $\rho(t)$ 
takes the form 
\begin{align}
\label{eqn:OrthogonalStateExampleNMZ}
	\frac{\rmd}{\rmd t}\PD{\Prj}\rho(t) &= \PD{\Prj}\PD{\LV}\PD{\Prj}\rho(t) +  \PD{\Prj}\PD{\LV}\int_{0}^{t} e^{(t-s)\PD{\PrjC}\PD{\LV}}\PD{\PrjC}\PD{\LV}\PD{\Prj}\rho(s)\,\rmd s.
\end{align}
Next, we look for suitable matrix representations
of $\PD{\Prj}$ and $\PD{\LV}$. 
%To this end, consider the linearly independent family of 
%functions 
%\begin{equation}
%\{1,\rho_{0} = 4g^{2},4g\PD{\LV}g, 4(\PD{\LV}g)^{2}\},
%\label{family}
%\end{equation}
%where the observable $g(\mathcal{O}) = \Tr(\mathcal{O})/3$ is as in the previous section.  
%It easy to show that the space spanned by these functions is 
%invariant under both $\PD{\LV}$ and $\PD{\Prj}$. 
%Note that
%\begin{align}
%	A\otimes B\mapsto\int_{\mathrm{SO}(3)}\Omega^{\rmT}A\Omega\otimes\Omega^{\rmT}B\Omega\,d\mu(\Omega)
%\end{align}
%is the orthogonal projection (with respect to the Hilbert-Schmidt inner product) onto the subspace spanned by $(\identity, \textsc{swap}, G_{x}\otimes G_{x} + G_{y}\otimes G_{y} + G_{z}\otimes G_{z})$.  So, this projection is
%\begin{align}
%	A\otimes B &\mapsto \frac{\Tr(A)\Tr(B)}{3}\identity + \frac{\Tr(AB)}{3}\textsc{swap} \notag\\
%	& \quad + \frac{\Tr(AG_{x})\Tr(BG_{x}) + \Tr(AG_{y})\Tr(BG_{y}) + \Tr(AG_{z})\Tr(BG_{z})}{3\sqrt{3}}\times\notag\\
%	& \qquad \times (G_{x}\otimes G_{x} + G_{y}\otimes G_{y} + G_{z}\otimes G_{z}).
%\end{align}
%Thus,
%\begin{align}
%X\otimes X & \mapsto -\frac{r^{2}}{3}(G_{x}\otimes G_{x} + G_{y}\otimes G_{y} + G_{z}\otimes G_{z})\\
%X\otimes X^{2} & \mapsto 0\\
%X^{2}\otimes X^{2} & \mapsto 	-\frac{r^{4}}{15}\big[6\identity +2\textsc{swap} + (G_{x}\otimes G_{x} + G_{y}\otimes G_{y} + G_{z}\otimes G_{z})\big].
%\end{align}
%%
% It may be verified that the 10-dimensional space spanned by
To this end, consider the 10-dimensional space spanned by 
the linearly independent functions 
\begin{align}
	\{1, g, g^{2}, \LV(g), \LV^{2}(g), g\PD{\LV}(g), g\PD{\LV}^{2}(g), \PD{\LV}(g)^{2}, \PD{\LV}(g)\PD{\LV}^{2}(g), (\PD{\LV}^{2}(g))^{2}\}.
	\label{eq:span}
\end{align}
It easy to show that the space spanned by these functions is 
invariant under $\PD{\LV}$ and $\PD{\Prj}$, 
and $\PD{\PrjC} = \identity - \PD{\Prj}$.  
With respect to basis elements \eqref{eq:span}, 
these operators may be represented as
\begin{align}
	\PD{\LV} & \simeq \begin{bmatrix}
		0 & 0 & 0 & 0 & 0 & 0 & 0 & 0 & 0 & 0\\
		0 & 0 & 0 & 0 & 0 & 0 & 0 & 0 & 0 & 0\\
		0 & 0 & 0 & 0 & 0 & 0 & 0 & 0 & 0 & 0 \\
		0 & 1 & 0 & 0 & -r^{2} & 0 & 0 & 0 & 0 & 0\\
		0 & 0 & 0 & 1 & 0 & 0 & 0 & 0 & 0 & 0\\
		0 & 0 & 2 & 0 & 0 & 0 & -r^{2} & 0 & 0 & 0\\
		0 & 0 & 0 & 0 & 0 & 1 & 0 & 0 & 0 & 0 \\
		0 & 0 & 0 & 0 & 0 & 1 & 0 & 0 & -r^{2} & 0\\
		0 & 0 & 0 & 0 & 0 & 0 & 1 & 2 & 0 & -2r^{2}\\ 
		0 & 0 & 0 & 0 & 0 & 0 & 0 & 0 & 1 & 0\\
	\end{bmatrix}\\
	\PD{\Prj} & \simeq \begin{bmatrix}
		1 & 0 & 0 & 0 & 0 & 0 & 0 & r^{2} & 0 & \frac{1}{5}r^{4}\\
		0 & 1 & 0 & 0 & -\frac{2}{3}r^{2} & 0 & 0 & 2r^{2} & 0 & -\frac{2}{5}r^{4}\\
		0 & 0 & 1 & 0 & 0 & 0 & -\frac{2}{3}r^{2} & -3r^{2} & 0 & \frac{21}{5}r^{4} \\
		0 & 0 & 0 & 0 & 0 & 0 & 0 & 0 & 0 & 0\\
		0 & 0 & 0 & 0 & 0 & 0 & 0 & 0 & 0 & 0\\
		0 & 0 & 0 & 0 & 0 & 0 & 0 & 0 & 0 & 0\\
		0 & 0 & 0 & 0 & 0 & 0 & 0 & 0 & 0 & 0 \\
		0 & 0 & 0 & 0 & 0 & 0 & 0 & 0 & 0 & 0\\
		0 & 0 & 0 & 0 & 0 & 0 & 0 & 0 & 0 & 0\\ 
		0 & 0 & 0 & 0 & 0 & 0 & 0 & 0 & 0 & 0\\
	\end{bmatrix}
\end{align}
Next, we observe that $\PD{\Prj}$, restricted to the span of \eqref{eq:span} 
has image the subspace spanned by $\{1,g,g^{2}\}$.  Using the 
fact that the spectrum (with multiplicity) 
of $\PD{\PrjC}\PD{\LV}$ on the span of \eqref{eq:span} is
\begin{align*}
\sigma(\PD{\PrjC}\PD{\LV}) = \{0,0,0,0, \pm ir/\sqrt{3}, \pm r(\alpha+i\beta), \pm r(\alpha - i\beta)\}
\end{align*}
where
\begin{align}
	\alpha & = \sqrt{\frac{7}{12} + \sqrt{\frac{67}{15}}} & \beta & = \sqrt{-\frac{7}{12} + \sqrt{\frac{67}{15}}},
\end{align}
it can be verified that, on the 3-dimensional space spanned by $\{1,g,g^{2}\}$, 
\begin{align}
\PD{\Prj}&\PD{\LV}  \simeq \begin{bmatrix}0 & 0 & 0 \\ 0 & 0 & 0\\0 & 0 & 0\end{bmatrix},
\end{align}
\begin{align}
&\PD{\Prj}\PD{\LV}e^{(t-s)\PD{\PrjC}\PD{\LV}}\PD{\PrjC}\PD{\LV}\notag\\
& \simeq \begin{bmatrix}0 & 0 & 2r^{2}\displaystyle\cosh(\alpha rt)\cos(\beta rt) + \frac{83r^{2}}{30\alpha\beta}\sinh(\alpha rt)\sin(\beta rt)\\
0 & \displaystyle -\frac{2}{3}r^{2}\cos(rt/\sqrt{3}) & \displaystyle\frac{1}{211}\left[842r^{2}\cosh(\alpha rt)\cos(\beta rt) + \frac{28319r^{2}}{30\alpha\beta}\sinh(\alpha rt)\sin(\beta rt) + 2\cos(t/\sqrt{3})\right]\\0 & 0 & 
\displaystyle -\frac{22}{3}r^{2}\cosh(\alpha rt)\cos(\beta rt) + \frac{239r^{2}}{90\alpha\beta}\sinh(\alpha rt)\sin(\beta rt)\end{bmatrix}.
\label{eq:integrand}
\end{align}
With respect to $\{1,g,g^{2}\}$, the NMZ equation \eqref{eqn:OrthogonalStateExampleNMZ} then becomes
\begin{equation}
 \frac{dA(t)}{dt} = \int_{0}^{t}
\PD{\Prj}\PD{\LV}e^{(\tau-s)\PD{\PrjC}\PD{\LV}}\PD{\PrjC}\PD{\LV}A(\tau)
d\tau,
\label{eq:intDeq} 
\end{equation}
where $A(t)=[a_1(t), a_2(t),a_3(t)]^T$ are the 
components of $\PD{\Prj}\rho(t)$ relative to $\{1,g,g^{2}\}$, 
and  $\PD{\Prj}\PD{\LV}e^{(\tau-s)\PD{\PrjC}\PD{\LV}}\PD{\PrjC}\PD{\LV}$
is the $3\times 3$ matrix given explicitly in \eqref{eq:integrand}.
The integro-differential equation \eqref{eq:intDeq} can then finally be solved 
via Langrange transforms to obtain
\begin{align}
	\PD{\Prj}\rho(t) & = \frac{18}{5}[2 - \cos(rt) - \cos(2rt)]\notag\\
	& \qquad  + \frac{4}{9495}[-19179 + 61376\cos(rt) - 42917\cos(2rt) - 32241rt\sin(rt)]g \notag\\
	& \qquad  + \frac{1}{15}[67 - 106\cos(rt) + 54\cos(2rt)]\rho_{0},
\end{align}
where $r^2=x_1^2+x_2^2+x_3^2$ and $g=\Tr(\mathcal{O})/3$.

\section{Summary}
\label{sec:Summary}
We have developed a new formulation of the 
Nakajima-Mori-Zwanzig (NMZ) method of projections 
based on operator algebras of observables and associated states. 
The new theory does not depend on the commutativity 
of the observable algebra, and therefore it is equally applicable 
to both classical and quantum systems.
%We showed that two commonly used NMZ equations have a deep connection through a duality principle between observables and states.
We established a duality principle between the NMZ 
formulation in the space of observables and associated space of 
states which extends the well-known duality between 
Koopman and Perron-Frobenious operators to reduced 
observable algebras and states.
We also provided guidance on the selection of the 
projection operators appearing in NMZ by proving that 
the only projections onto $C^*$-subalgebras that preserve 
all states are the conditional expectations -- a  
special class of projections on $C^{*}$-algebras.
Such projections can be determined 
systematically for a broad class of bounded and unbounded 
observables. This allows us to derive formally exact NMZ equations for 
observables and states in high-dimensional classical and 
quantum systems. Computing the solution to such equations
is usually a very challenging task that 
needs to address approximation of memory integrals and noise 
terms for which suitable (typically problem-class-dependent) 
algorithms are needed.

\begin{acknowledgments}
This work was supported by the Air Force Office of Scientific
Research grant FA9550-16-1-0092.
\end{acknowledgments}

\appendix
\section{Nondegenerate Homomorphisms and Approximate Identities}\label{app:nonDegenHomApproxIds}
\begin{definition}[Approximate Identity]
	Given a $C^{*}$-algebra $\mathfrak{A}$, a net $\{E_{\alpha}\}\subset\mathfrak{A}$ is an \emph{approximate identity} for $\mathfrak{A}$ if $E_{\alpha}\geq 0$ and $\|E_{\alpha}\|\leq 1$ for all $\alpha$ and if $E_{\alpha}A\to A$ for all $A\in\mathfrak{A}$.
\end{definition}

\begin{lemma}\label{lem:NondegenApproxIds}
Let $\mathfrak{A}$, $\mathfrak{B}$ be $C^{*}$-algebras and $\Psi:\mathfrak{B}\to\mathfrak{A}$ a $C^{*}$-homomorphism.  Then $\Psi$ is nondegenerate (i.e., $\Span_{\mathbb{C}}\{\Psi(b)a\;:\;b\in\mathfrak{B},a\in\mathfrak{A}\}$) if and only if $\Psi$ is \emph{approximately unital} [i.e., for some (and therefore every) approximate identity \cite{Segal1947a} $\{E_{\beta}\}\subset\mathfrak{B}$, $\{\Psi(E_{\beta})\}$ is a approximate identity for $\mathfrak{A}$].
\end{lemma}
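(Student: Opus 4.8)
The plan is to prove the two implications of the biconditional separately, using only that a $*$-homomorphism of $C^{*}$-algebras is positive and contractive (as already used for the Koopman operator in the main text) together with the standard fact that every $C^{*}$-algebra admits an approximate identity.

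First I would dispatch the ``approximately unital $\Rightarrow$ nondegenerate'' direction, which is immediate: if $\{\Psi(E_{\beta})\}$ is an approximate identity for $\mathfrak{A}$ for some approximate identity $\{E_{\beta}\}$ of $\mathfrak{B}$, then every $a\in\mathfrak{A}$ is the limit of $\Psi(E_{\beta})a$, and each $\Psi(E_{\beta})a$ lies in $\Span_{\mathbb{C}}\{\Psi(b)a'\;:\;b\in\mathfrak{B},\,a'\in\mathfrak{A}\}$; hence this span is dense and $\Psi$ is nondegenerate.

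For the converse, assume $\Psi$ nondegenerate and fix an \emph{arbitrary} approximate identity $\{E_{\beta}\}$ of $\mathfrak{B}$. Since $\Psi$ is a $*$-homomorphism, $\Psi(E_{\beta})\geq 0$ and $\|\Psi(E_{\beta})\|\leq\|E_{\beta}\|\leq 1$, so the only thing left to check is $\Psi(E_{\beta})a\to a$ for every $a\in\mathfrak{A}$. I would verify this first on the dense span $D:=\Span_{\mathbb{C}}\{\Psi(b)a'\;:\;b\in\mathfrak{B},\,a'\in\mathfrak{A}\}$: for $a=\Psi(b)a'$ one has $\Psi(E_{\beta})\Psi(b)a'=\Psi(E_{\beta}b)a'\to\Psi(b)a'$ because $E_{\beta}b\to b$ and $\Psi$ is continuous, and this extends to all of $D$ by linearity. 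Then I would pass to a general $a\in\mathfrak{A}$ by a standard $3\varepsilon$ argument that uses the density of $D$ together with the uniform bound $\|\Psi(E_{\beta})\|\leq 1$. This shows $\{\Psi(E_{\beta})\}$ is an approximate identity for $\mathfrak{A}$; combining the two implications also yields the parenthetical claim that ``for some'' and ``for every'' approximate identity of $\mathfrak{B}$ are equivalent here.

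The argument is essentially routine; the only step deserving any care is the passage from convergence on the dense set $D$ to convergence on all of $\mathfrak{A}$, which is exactly where the uniform bound $\|\Psi(E_{\beta})\|\leq 1$ enters (equivalently: the left-multiplication maps $a\mapsto\Psi(E_{\beta})a$ are uniformly bounded and converge pointwise on a dense subspace, hence converge strongly). I do not anticipate any genuine obstacle beyond this bookkeeping.
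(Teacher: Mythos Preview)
Your proposal is correct and follows essentially the same approach as the paper: the ``nondegenerate $\Rightarrow$ approximately unital'' direction is handled identically (convergence on the dense span $D$ via $\Psi(E_{\beta})\Psi(b)a'=\Psi(E_{\beta}b)a'\to\Psi(b)a'$, then extension to all of $\mathfrak{A}$ using the uniform bound $\|\Psi(E_{\beta})\|\leq 1$), and for the easy direction the paper argues by contrapositive while you argue directly, which is only a cosmetic difference.
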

\begin{proof}
First, assume that $\Psi$ is nondegenerate and let $\{E_{\beta}\}\subset\mathfrak{B}$ be an approximate identity.  For any $b\in\mathfrak{B}$, $\lim_{\beta}E_{\beta}b = b$, and by the continuity of $\Psi$, $\lim_{\beta}\Psi(E_{\beta}b) = \Psi(b)$.  Then for any $a\in\mathfrak{A}$, $\lim_{\beta}\Psi(E_{\beta}b)a = \Psi(b)a$.  Therefore
\begin{equation}
\lim_{\beta}\|\Psi(E_{\beta})\Psi(b)a - \Psi(b)a\|  = \lim_{\beta}\|\Psi(E_{\beta}b)a - \Psi(b)a\| = 0
\end{equation}
for any $b\in\mathfrak{B}$ and $a\in\mathfrak{A}$.  Since nondegeneracy of $\Psi$ implies that $\Span_{\mathbb{C}}\{\Psi(b)a\,:\,b\in\mathfrak{B}, a\in\mathfrak{A}\}$ is dense in $\mathfrak{A}$, we have found that $\Psi(E_{\beta})a\to a$ for all $a$ in a dense subspace of $\mathfrak{A}$.  Since $\|E_{\beta}\|\leq 1$ and therefore $\|\Psi(E_{\beta})\|\leq 1$, we conclude that $\lim_{\beta}\Psi(E_{\beta})a\to a$ for all $a\in\mathfrak{A}$, i.e. $\{\Psi(E_{\beta})\}$ is an approximate identity on $\mathfrak{A}$.

Now, suppose that $\Psi$ is degenerate, so that $T:=\Span_{\mathbb{C}}\{\Psi(b)a\;:\;b\in\mathfrak{B},a\in\mathfrak{A}\}$ is not dense in $\mathfrak{A}$.  Then there exist $\epsilon>0$ and $a\in\mathfrak{A}$ such that $\|a-t\|>\epsilon$ for all $t\in T$.  Then let $\{E_{\beta}\}$ be any approximate identity in $\mathfrak{B}$.  Since $\Psi(E_{\beta})a\in T$ for all $\beta$, $\|\Psi(E_{\beta})a - a\|>\epsilon$ for all $\beta$, and therefore $\Psi(E_{\beta})a\not\to a$, so that $\{\Psi(E_{\beta})\}$ is not an approximate identity.  So, by contradiction, if $\{\Psi(E_{\beta})\}$ is an approximate identity for some approximate identity $\{E_{\beta}\}$, then $\Psi$ must be nondegenerate.
\end{proof}

It may be noted that, if $\mathfrak{B}$ is unital, then $\Psi:\mathfrak{B}\to\mathfrak{A}$ is nondegenerate if and only if $\mathfrak{A}$ is a unital algebra and $\Psi$ is a unital $C^{*}$-homomorphism.  This follows from the simple fact that $E_{\beta}\equiv\identity$ is the only possible \emph{constant} approximate identity.

%\red{Thus nondegenerate $C^{*}$-homomorphisms are analogous to unital morphisms.  In the case of unital algebras, a non-unital morphism $\Psi:\mathfrak{B}\to\mathfrak{A}$ must map $\identity$ to a self-adjoint idempotent $P$, which acts as the identity in the subalgebra $P\mathfrak{A}P$.  Can we safely ignore this possibility for now and work solely in the category of $C^{*}$-algebras and nondegenerate $C^{*}$-homomorphisms?  What is the analogous form for non-unital algebras?}

\begin{lemma}
For any (contractive) approximate identity ${E_{\alpha}}\subset\mathfrak{A}$, $\|E_{\alpha}\|\to 1$.
\end{lemma}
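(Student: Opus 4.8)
The plan is to extract the conclusion from the defining convergence $E_{\alpha}A\to A$ applied to a single conveniently normalized element, combined with submultiplicativity of the $C^{*}$-norm. First I would dispose of the degenerate situation: if $\mathfrak{A}=\{0\}$ the assertion is vacuous (there is nothing of norm $1$), so I assume $\mathfrak{A}\neq\{0\}$ and fix an element $A\in\mathfrak{A}$ with $\|A\|=1$ (obtained by normalizing any nonzero element).

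Next, for every index $\alpha$ submultiplicativity gives $\|E_{\alpha}A\|\leq\|E_{\alpha}\|\,\|A\|=\|E_{\alpha}\|$, so that $\|E_{\alpha}\|\geq\|E_{\alpha}A\|$ for all $\alpha$. Now I would invoke the approximate-identity property: given $\epsilon\in(0,1)$ there is an index $\alpha_{0}$ such that $\|E_{\alpha}A-A\|<\epsilon$ whenever $\alpha\geq\alpha_{0}$, and by the reverse triangle inequality this forces $\|E_{\alpha}A\|>\|A\|-\epsilon=1-\epsilon$. Chaining the two inequalities yields $\|E_{\alpha}\|>1-\epsilon$ for all $\alpha\geq\alpha_{0}$. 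Since by hypothesis $\|E_{\alpha}\|\leq1$ for every $\alpha$ (the approximate identity is contractive), we obtain $\bigl|\,\|E_{\alpha}\|-1\,\bigr|<\epsilon$ for all $\alpha\geq\alpha_{0}$, which is exactly the statement $\|E_{\alpha}\|\to1$ along the net.

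There is essentially no analytic obstacle here; the only points requiring care are that one is working with a net rather than a sequence (so the convergence statement must be phrased with the appropriate cofinal index $\alpha_{0}$) and that the claim genuinely needs $\mathfrak{A}\neq\{0\}$, since in the zero algebra every $E_{\alpha}$ vanishes. If desired, one could instead phrase the argument entirely in terms of $\liminf$: from $\|E_{\alpha}\|\geq\|E_{\alpha}A\|$ and $\|E_{\alpha}A\|\to\|A\|=1$ one gets $\liminf_{\alpha}\|E_{\alpha}\|\geq1$, while $\limsup_{\alpha}\|E_{\alpha}\|\leq1$ is immediate from contractivity, and together these give the limit.
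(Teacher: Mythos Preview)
Your proof is correct and follows essentially the same approach as the paper: both fix a nonzero element, use submultiplicativity to bound $\|E_{\alpha}\|$ below by $\|E_{\alpha}A\|/\|A\|$, and combine $\liminf_{\alpha}\|E_{\alpha}\|\geq 1$ with the contractivity bound $\limsup_{\alpha}\|E_{\alpha}\|\leq 1$. Your $\epsilon$-$\alpha_{0}$ formulation is just an explicit unpacking of the $\liminf$ argument the paper gives (and which you yourself note at the end).
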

\begin{proof}
For any nonzero $a\in\mathfrak{A}$, 
\begin{equation}
\liminf_{\alpha}\|E_{\alpha}\|\geq \liminf_{\alpha}\frac{\|E_{\alpha}a\|}{\|a\|} = \frac{\|a\|}{\|a\|} = 1,
\end{equation}
and, since $\|E_{\alpha}\|\leq 1$ for all $\alpha$, $\limsup_{\alpha}\|E_{\alpha}\|\leq 1$.  Therefore $\lim_{\alpha}\|E_{\alpha}\| = 1$.
\end{proof}

\begin{lemma}\label{lem:nonDegenNorm1}
Let $\mathfrak{A}$, $\mathfrak{B}$ be $C^{*}$-algebras and $\Psi:\mathfrak{B}\to\mathfrak{A}$ a nondegenerate $C^{*}$-homomorphism.  Then $\|\Psi\| = 1$.
\end{lemma}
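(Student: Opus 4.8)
The plan is to obtain the statement as an immediate corollary of the two preceding lemmas together with the standard fact, already invoked in the main text, that every $C^{*}$-homomorphism is contractive. First I would record the upper bound: since $\Psi$ is a $C^{*}$-homomorphism it is norm-decreasing (for self-adjoint $b$, $\|\Psi(b)\|$ is the spectral radius of $\Psi(b)$, bounded by that of $b$, namely $\|b\|$; the general case follows from $\|\Psi(b)\|^{2}=\|\Psi(b^{*}b)\|\leq\|b^{*}b\|=\|b\|^{2}$), hence $\|\Psi\|\leq 1$. It then remains only to prove $\|\Psi\|\geq 1$.

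For the lower bound I would fix any approximate identity $\{E_{\beta}\}\subset\mathfrak{B}$, which exists for every $C^{*}$-algebra. By Lemma \ref{lem:NondegenApproxIds}, nondegeneracy of $\Psi$ implies that $\{\Psi(E_{\beta})\}$ is an approximate identity for $\mathfrak{A}$; it is automatically contractive since $\|\Psi(E_{\beta})\|\leq\|E_{\beta}\|\leq 1$. The preceding lemma then yields $\|\Psi(E_{\beta})\|\to 1$. Since $\|E_{\beta}\|\leq 1$, for every $\beta$ with $E_{\beta}\neq 0$ we have $\|\Psi\|\geq\|\Psi(E_{\beta})\|/\|E_{\beta}\|\geq\|\Psi(E_{\beta})\|$, and taking the limit over $\beta$ gives $\|\Psi\|\geq 1$. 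Combining with the upper bound, $\|\Psi\|=1$.

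There is no genuine obstacle here: the content has been entirely packaged into Lemma \ref{lem:NondegenApproxIds} and the norm-convergence of (contractive) approximate identities, and the present statement merely transports that information across $\Psi$. The one point worth keeping in mind is the tacit nontriviality hypothesis --- if $\mathfrak{B}=\{0\}$ then $\Psi=0$ and the conclusion fails, so the statement is understood for $\mathfrak{B}\neq\{0\}$; in that case $\|E_{\beta}\|\to 1$ forces the $E_{\beta}$ to be eventually nonzero, so the subnet on which the chain of inequalities above is valid is cofinal, and the limit is legitimately taken along it.
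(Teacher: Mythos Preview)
Your proof is correct and follows essentially the same approach as the paper's: both obtain the upper bound from contractivity of $C^{*}$-homomorphisms and the lower bound by transporting an approximate identity through $\Psi$ via Lemma~\ref{lem:NondegenApproxIds}, then invoking the preceding norm-convergence lemma to get $\|\Psi(E_{\beta})\|\to 1$. Your treatment is marginally more careful in flagging the $\mathfrak{B}=\{0\}$ triviality and the need for eventually nonzero $E_{\beta}$, but otherwise the argument is the same.
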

\begin{proof}
	Since $\Psi$ is a $C^{*}$-homomorphism, $\|\Psi\|\leq 1$.  Let $\{E_{\beta}\}\subset \mathfrak{B}$ be an approximate identity.  Then $\{\Psi(E_{\beta})\}$ is also an approximate identity, and $\|E_{\beta}\|\to 1$ and $\|\Psi(E_{\beta})\|\to 1$, so that 
	\begin{equation}
		\|\Psi\|\geq\lim_{\beta}\frac{\|\Psi(E_{\beta})\|}{\|E_{\beta}\|} = 1,
	\end{equation}
	and therefore $\|\Psi\|=1$.
\end{proof}

\begin{lemma}\label{lem:nonDegenStatePreserv}
Let $\mathfrak{A}$, $\mathfrak{B}$ be $C^{*}$-algebras and $\Psi:\mathfrak{B}\to\mathfrak{A}$ a nondegenerate $C^{*}$-homomorphism. $\|\BD{\Psi}\phi\| = \|\phi\|$, for any $\phi\geq 0$, where $\BD{\Psi}:\BD{\mathfrak{A}}\to\BD{\mathfrak{B}}$ is the adjoint operator.
\end{lemma}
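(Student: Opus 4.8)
The plan is to combine Lemma~\ref{lem:NondegenApproxIds} with the classical description of the norm of a positive linear functional on a $C^{*}$-algebra in terms of an approximate identity. Since $\Psi$ is a nondegenerate $C^{*}$-homomorphism, Lemma~\ref{lem:nonDegenNorm1} gives $\|\Psi\|=1$, hence $\|\BD{\Psi}\|=1$ and $\|\BD{\Psi}\phi\|\le\|\phi\|$ for every $\phi\in\BD{\mathfrak{A}}$; so only the reverse inequality for positive $\phi$ is at issue.

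First I would fix an increasing approximate identity $\{E_{\beta}\}\subset\mathfrak{B}$ (every $C^{*}$-algebra admits one, e.g.\ the net of positive elements of norm less than one). Because a $\ast$-homomorphism is a positive map, $\beta\mapsto\Psi(E_{\beta})$ is again increasing, has norm at most one, and by Lemma~\ref{lem:NondegenApproxIds} the net $\{\Psi(E_{\beta})\}$ is an approximate identity for $\mathfrak{A}$. Moreover, since $\Psi$ carries positive elements to positive elements, $\BD{\Psi}\phi=\phi\circ\Psi$ is a positive linear functional on $\mathfrak{B}$ whenever $\phi\ge 0$.

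Next I would invoke the standard fact (provable in a line from the Cauchy--Schwarz inequality for the positive semidefinite sesquilinear form $(x,y)\mapsto\psi(x^{*}y)$, together with $\psi(x^{*}F_{\beta}x)\le\|x\|^{2}\|\psi\|$) that for any positive linear functional $\psi$ on a $C^{*}$-algebra $\mathfrak{C}$ and any increasing approximate identity $\{F_{\beta}\}\subset\mathfrak{C}$ one has $\|\psi\|=\lim_{\beta}\psi(F_{\beta})=\sup_{\beta}\psi(F_{\beta})$. Applying this twice — to $\phi$ on $\mathfrak{A}$ with the approximate identity $\{\Psi(E_{\beta})\}$, and to $\BD{\Psi}\phi$ on $\mathfrak{B}$ with the approximate identity $\{E_{\beta}\}$ — gives
\begin{equation}
\|\phi\| \;=\; \lim_{\beta}\phi\bigl(\Psi(E_{\beta})\bigr) \;=\; \lim_{\beta}\bigl(\BD{\Psi}\phi\bigr)(E_{\beta}) \;=\; \bigl\|\BD{\Psi}\phi\bigr\|,
\nonumber
\end{equation}
where the middle equality is the termwise identity $\phi(\Psi(E_{\beta}))=(\BD{\Psi}\phi)(E_{\beta})$. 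Together with $\|\BD{\Psi}\phi\|\le\|\phi\|$ this yields $\|\BD{\Psi}\phi\|=\|\phi\|$.

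The only genuinely substantive ingredient is the norm-via-approximate-identity formula in the third step; the rest is bookkeeping about positivity and Lemma~\ref{lem:NondegenApproxIds}. If one prefers not to rely on the approximate identity being increasing, the argument survives verbatim with $\limsup_{\beta}$ in place of $\lim_{\beta}$, since the two expressions whose limsups are compared are literally the same net of real numbers $\phi(\Psi(E_{\beta}))=(\BD{\Psi}\phi)(E_{\beta})$.
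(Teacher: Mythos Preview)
Your proof is correct and follows essentially the same approach as the paper's own proof: both use the norm-via-approximate-identity formula $\|\psi\|=\lim_{\beta}\psi(F_{\beta})$ for positive functionals together with Lemma~\ref{lem:NondegenApproxIds} to identify $\lim_{\beta}(\BD{\Psi}\phi)(E_{\beta})$ with $\lim_{\beta}\phi(\Psi(E_{\beta}))=\|\phi\|$. Your version is slightly more explicit about the upper bound and the positivity of $\BD{\Psi}\phi$, but the argument is the same.
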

\begin{proof}
For any $\phi\in\mathfrak{A}^{*}$, $\phi\geq 0$, $\|\phi\| = \lim_{\beta}|\phi(F_{\beta})|$, where $\{F_{\beta}\}\subset\mathfrak{A}$ is an approximate identity.  Thus, for any approximate identity $\{E_{\beta}\}\subset \mathfrak{B}$, 
\begin{equation}
\|\BD{\Psi}\phi\| = \lim_{\beta}|(\BD{\Psi}\phi)(E_{\beta})| = \lim_{\beta}|\phi(\Psi(E_{\beta}))| = \|\phi\|
\end{equation}
since, by Lemma \ref{lem:NondegenApproxIds}, $\{\Psi(E_{\beta})\}$ is an approximate identity for $\mathfrak{A}$.
\end{proof}

\section{State-Preserving Maps}
\label{app:statePreservingMaps}
\begin{theorem}
\label{thm:statePreservingMaps}
Let $\mathfrak{A}, \mathfrak{B}$ be $C^{*}$-algebras, 
and $\Psi:\mathfrak{A}\to\mathfrak{B}$ a linear map 
satisfying $\BD{\Psi}[\mathcal{S}(\mathfrak{B})] \subset\mathcal{S}(\mathfrak{A})$.  Then $\Psi$ is a positive contraction with $\|\Psi\| = 1$.  If $\mathfrak{A}$ is unital, then $\mathfrak{B}$ is unital and $\Psi(\identity) =\identity$.
\end{theorem}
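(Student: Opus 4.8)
The plan is to derive everything from the defining relation $(\BD{\Psi}\phi)(A)=\phi(\Psi(A))$ together with the standard order-theoretic fact that in any $C^{*}$-algebra an element $X$ is positive precisely when $\phi(X)\geq 0$ for every state $\phi$ (passing to the unitization when the algebra is non-unital), and the companion fact that $\phi(C)=0$ for all states forces $C=0$.

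First I would establish positivity of $\Psi$. If $A\geq 0$ in $\mathfrak{A}$ and $\phi\in\mathcal{S}(\mathfrak{B})$ is arbitrary, then $\BD{\Psi}\phi\in\mathcal{S}(\mathfrak{A})$ by hypothesis, so $\phi(\Psi(A))=(\BD{\Psi}\phi)(A)\geq 0$; as $\phi$ was arbitrary, $\Psi(A)\geq 0$. Thus $\Psi$ is positive, hence hermitian, and by the standard fact that positive linear maps between $C^{*}$-algebras are bounded, $\BD{\Psi}$ is a genuine bounded adjoint with $\|\Psi\|=\|\BD{\Psi}\|$. For the norm, the lower bound is immediate: for any $\phi\in\mathcal{S}(\mathfrak{B})$ (the statement being vacuous if $\mathfrak{B}=0$) we have $\|\Psi\|=\|\BD{\Psi}\|\geq\|\BD{\Psi}\phi\|=1$. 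For the upper bound I would invoke the standard description of the norm of a positive map, $\|\Psi\|=\sup_{\alpha}\|\Psi(E_{\alpha})\|$ over an approximate identity $\{E_{\alpha}\}\subset\mathfrak{A}$ (which reduces to $\|\Psi\|=\|\Psi(\identity)\|$ when $\mathfrak{A}$ is unital); since each $\Psi(E_{\alpha})$ is positive, $\|\Psi(E_{\alpha})\|=\sup_{\phi}\phi(\Psi(E_{\alpha}))=\sup_{\phi}(\BD{\Psi}\phi)(E_{\alpha})\leq 1$, so $\|\Psi\|=1$.

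Finally, assume $\mathfrak{A}$ is unital and put $P:=\Psi(\identity)$. Positivity gives $P\geq 0$, and for every $\phi\in\mathcal{S}(\mathfrak{B})$ we get $\phi(P)=(\BD{\Psi}\phi)(\identity)=1$, hence $\|P\|=\sup_{\phi}\phi(P)=1$ and therefore $0\leq P\leq\identity$ in the unitization $\widetilde{\mathfrak{B}}$. Consequently $\identity-P\geq 0$, and every state of $\mathfrak{B}$, canonically extended to $\widetilde{\mathfrak{B}}$, annihilates $\identity-P$. Cauchy--Schwarz for such a state then yields $|\phi((\identity-P)B)|^{2}\leq\phi(\identity-P)\,\phi(B^{*}(\identity-P)B)=0$ for all $B\in\mathfrak{B}$, so $\phi(PB-B)=0$ for every state $\phi$, whence $PB=B$; taking adjoints gives $BP=B$ as well. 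Thus $\mathfrak{B}$ is unital with $\identity_{\mathfrak{B}}=P=\Psi(\identity)$.

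I expect the genuinely delicate point to be the upper bound $\|\Psi\|\leq 1$: positivity only controls $\|\Psi(X)\|$ through states for normal $X$, so extending the contraction estimate to all of $\mathfrak{A}$ relies on the structural theorem for positive maps (equivalently, Russo--Dye together with Kadison's inequality, or passage to the bidual $\mathfrak{A}^{**}$ to reduce to the unital von Neumann algebra case). The remaining bookkeeping --- in particular which states of $\mathfrak{B}$ extend to states of $\widetilde{\mathfrak{B}}$ and the verification that $P$ is a two-sided unit --- is routine.
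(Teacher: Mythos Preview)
Your proof is correct and rests on the same substantive input as the paper's: positivity is read off from the duality $(\BD{\Psi}\phi)(A)=\phi(\Psi A)$, and the contraction estimate $\|\Psi\|\leq 1$ ultimately comes from the Russo--Dye/structural fact that a positive map attains its norm at the identity.  The organization differs, however.  The paper passes immediately to the enveloping von Neumann algebras $\BDD{\mathfrak{A}},\BDD{\mathfrak{B}}$, shows $\BDD{\Psi}(\identity_{\BDD{\mathfrak{A}}})=\identity_{\BDD{\mathfrak{B}}}$ by testing against states, and then invokes Russo--Dye once to get $\|\BDD{\Psi}\|=1$; the unital statement then falls out because $\identity_{\mathfrak{A}}=\identity_{\BDD{\mathfrak{A}}}$ and $\Psi(\identity_{\mathfrak{A}})=\identity_{\BDD{\mathfrak{B}}}\in\mathfrak{B}$.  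You instead stay in $\mathfrak{A}$ and $\mathfrak{B}$, using the approximate-identity form of the norm formula for positive maps to get $\|\Psi\|=1$, and then give a self-contained Cauchy--Schwarz argument that $P=\Psi(\identity)$ is a two-sided unit.  Your route is a bit more elementary for the unital conclusion (no bidual needed there), while the paper's route is more uniform and makes transparent why the contraction bound holds even when $\mathfrak{A}$ is non-unital---which, as you correctly flag, is the one genuinely nontrivial step.
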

\begin{proof}
First note that $\BD{\Psi}[\mathcal{S}(\mathfrak{B})] \subset\mathcal{S}(\mathfrak{A})$ requires that $\BD{\Psi}$ be positive and  $\|\BD{\Psi}\phi\| = \|\phi\|$ for all $\phi\geq 0$, and by [\onlinecite[Th. 4.3.4]{Kadison1997}], $\BD{\Psi} \geq 0$ if and only if $\Psi \geq 0$:
\begin{align}
\BD{\Psi}\phi\geq 0 \,\forall\, \phi\geq 0 & \Leftrightarrow \BD{\Psi}\phi(g)\geq 0 \,\forall\, \phi\geq 0,\, g\geq 0 \notag\\
& \Leftrightarrow \phi(\Psi g)\geq 0 \,\forall\, \phi\geq 0,\, g\geq 0 \notag\\
& \Leftrightarrow \Psi g\geq 0 \,\forall\, g\geq 0. \notag
\end{align}
Now we pass to the second dual $\BDD{\mathfrak{A}}$ of $\mathfrak{A}$ which, via the Takeda-Sherman theorem \cite{Takeda1954,Blackadar2006} may be endowed with a multiplication which renders it a (unital) von Neumann algebra (the universal enveloping von Neumann algebra of $\mathfrak{A}$).  We likewise endow $\BDD{\mathfrak{B}}$ with the structure of a (unital) von Neumann algebra.  %The isometric isometries \cite[III.5.2.6]{Blackadar2006} between $\PD{\mathfrak{M}}$ and $\BD{\mathfrak{A}}$ and between $\PD{\mathfrak{N}}$ and $\BD{\mathfrak{B}}$ identify the bounded functionals of $\mathfrak{A}$ and $\mathfrak{B}$ with the normal functionals of $\mathfrak{M}$ and $\mathfrak{N}$,  allow $\BD{\Psi}:\BD{\mathfrak{B}}\to\BD{\mathfrak{A}}$ to be identified isometrically with a normal linear map $\Upsilon: \PD{\mathfrak{N}}\to \PD{\mathfrak{M}}$.  It inherits positivity from $\Psi$.  In addition, for each state $\phi\in\mathcal{S}(\mathfrak{B})$, passing to the identified normal state $\tilde{\phi}$ of $\mathfrak{N}$, we have $\tilde{\phi}(\identity_{\mathfrak{N}} - \BD{\Upsilon}(\identity_{\mathfrak{M}})) = \|\tilde{\phi}\| - \|\Upsilon\tilde{\phi}\| = \|\phi\| - \|\BD{\Psi}\phi\| = 0$ by assumption about $\Psi$ and \cite[Prop. II.6.2.5]{Blackadar2006}. 
 Then for each state $\phi\in\mathcal{S}(\mathfrak{B})$, we have $\phi(\identity_{\BDD{\mathfrak{B}}} - \BDD{\Psi}(\identity_{\BDD{\mathfrak{A}}})) = \phi(\identity_{\BDD{\mathfrak{B}}}) - \BD{\Psi}\phi(\identity_{\BDD{\mathfrak{A}}}) = \|\phi\| - \|\BD{\Psi}\phi\| = 0$ by assumption about $\Psi$ and [\onlinecite[Prop. II.6.2.5]{Blackadar2006}].  Since this holds for all states of $\mathfrak{B}$, which comprise all normal states of $\BDD{\mathfrak{B}}$, and they separate points in $\BDD{\mathfrak{B}}$, it follows that $\BDD{\Psi}(\identity_{\BDD{\mathfrak{A}}}) = \identity_{\BDD{\mathfrak{A}}}$, i.e. $\BDD{\Psi}$ is a unital positive map, and therefore is a contraction \cite{Russo1966} with $\|\BDD{\Psi}\| = 1$.  And because $\|\Psi\| = \|\BD{\Psi}\| = \|\BDD{\Psi}\|$, $\Psi$ is also a positive contraction with $\|\Psi\| = 1$.
\end{proof}

\begin{corollary}
\label{cor:statePreservingProjection}
  If $\mathfrak{A}$ is a $C^{*}$-algebra, $\Prj$ a linear projection on $\mathfrak{A}$ satisfying $\BD{\Prj}[\mathcal{S}(\mathfrak{A})]\subset\mathcal{S}(\mathfrak{A})$, and the image of $\Prj$ is a $C^{*}$-subalgebra $\mathfrak{B}\subset\mathfrak{A}$,  then $\Prj$ is a conditional expectation.
\end{corollary}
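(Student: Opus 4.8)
The plan is to derive the statement by combining Theorem~\ref{thm:statePreservingMaps} with Tomiyama's classical characterization of conditional expectations as the norm-one projections onto a $C^{*}$-subalgebra. First I would invoke Theorem~\ref{thm:statePreservingMaps} with \emph{both} $C^{*}$-algebras taken to be $\mathfrak{A}$ itself and with the map $\Psi$ there taken to be $\Prj\colon\mathfrak{A}\to\mathfrak{A}$: the hypothesis $\BD{\Prj}[\mathcal{S}(\mathfrak{A})]\subset\mathcal{S}(\mathfrak{A})$ of the corollary is exactly the state-preservation hypothesis required by that theorem, so it follows at once that $\Prj$ is a positive contraction with $\|\Prj\|=1$ (and, if $\mathfrak{A}$ is unital, $\Prj(\identity)=\identity$, consistent with $\mathfrak{B}$ being a unital subalgebra).

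Next I would observe that $\Prj$ is an idempotent linear map, $\Prj^{2}=\Prj$, of norm one whose range is the $C^{*}$-subalgebra $\mathfrak{B}\subset\mathfrak{A}$; in particular $\Prj$ restricts to the identity on $\mathfrak{B}$. Tomiyama's theorem on projections of norm one \cite{Blackadar2006} then applies directly and shows that any such projection is automatically completely positive and satisfies the $\mathfrak{B}$-bimodule identity $\Prj(b_{1}ab_{2})=b_{1}\Prj(a)b_{2}$ for all $a\in\mathfrak{A}$ and all $b_{1},b_{2}\in\mathfrak{B}$. Together with the positivity and contractivity established in the previous step, this verifies every clause of the definition of conditional expectation recalled above, finishing the proof.

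The only genuinely substantive ingredient is Tomiyama's theorem, which I would cite rather than reprove. Its core is the passage from $\|\Prj\|=1$ to the bimodule property; the standard route is to first establish the Kadison--Schwarz inequality $\Prj(a)^{*}\Prj(a)\le\Prj(a^{*}a)$ and then to use $\Prj|_{\mathfrak{B}}=\id_{\mathfrak{B}}$ to force $\Prj(ba)=b\Prj(a)$, and symmetrically on the right, for $b\in\mathfrak{B}$. Should a self-contained treatment be desired, reproducing that argument would be the main --- and essentially the only --- piece of real work here; everything else is bookkeeping layered on top of Theorem~\ref{thm:statePreservingMaps}.
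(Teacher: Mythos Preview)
Your proposal is correct and follows essentially the same route as the paper: invoke Theorem~\ref{thm:statePreservingMaps} to obtain that $\Prj$ is a positive contraction of norm one, and then cite Tomiyama's theorem \cite{Tomiyama1957,Blackadar2006} to conclude that a norm-one projection onto a $C^{*}$-subalgebra is a conditional expectation. The paper's proof is just a two-line version of exactly this argument; your added commentary on the content of Tomiyama's result and its proof outline is accurate but not needed for the corollary itself.
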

\begin{proof}
By \ref{thm:statePreservingMaps}, $\Prj$ is a contractive projection onto a $C^{*}$-subalgebra, and therefore is a conditional expectation \cite{Tomiyama1957,Blackadar2006}. 
\end{proof}

%\bigskip
%\section{Alternate Dyson Identity}
%The Dyson identity can be arranged differently, yielding a different expression for $\frac{\rmd }{\rmd t}\EO(t,t_{0})\Prj$.
%\begin{align}
%	Y(t,t_{0})-Z(t,t_{0}) &= \int_{t_{0}}^{t}\frac{d}{ds}\Big(-Z(s,0)Y(t,s)\Big)\,\rmd s \notag\\
%	& = \int_{t_{0}}^{t}Z(s,0)(A(s)-B(s))Y(t,s)\,\rmd s
%\end{align}
%\begin{align}
%\EO(t,t_{0})  & = \TOR e^{\int_{t_{0}}^{t}\PrjC\LV[\tau]\,\rmd\tau} + \int_{t_{0}}^{t}\TOR e^{\int_{t_{0}}^{s}\PrjC\LV[\tau]\,\rmd\tau}\Prj\LV[s]\EO(t,s)\,\rmd s,
%\end{align}
%\begin{subequations}
%\begin{align}
%\frac{\rmd}{\rmd t}\EO(t,t_{0})\Prj & = \EO(t,t_{0})\LV[t]\Prj\\
%& = \TOR e^{\int_{t_{0}}^{t}\PrjC\LV[\tau]\,\rmd\tau}\LV[t]\Prj \notag \\
% & \quad + \int_{t_{0}}^{t}\TOR e^{\int_{t_{0}}^{s}\PrjC\LV[\tau]\,\rmd\tau}\Prj\LV[s]\EO(t,s)\LV[t]\Prj\,\rmd
%  s\label{eqn:AltHeisenbergGLE}\\
%   & = \TOR e^{\int_{t_{0}}^{t}\PrjC\LV[\tau]\,\rmd\tau}\PrjC\LV[t]\Prj + \EO(t,t_{0})\Prj\LV[t]\Prj\notag \\
%   & \quad + \int_{t_{0}}^{t}\TOR e^{\int_{t_{0}}^{s}\PrjC\LV[\tau]\,\rmd\tau}\Prj\LV[s]\EO(t,s)\PrjC\LV[t]\Prj\,\rmd s  
%\end{align}
%\end{subequations}
%\red{[Is it fair to call this a generalized Langevin equation for $\EO(t,t_{0})\Prj$?  Are these expressions of any use (in particular, because they move the $\EO(t,s)$ to the right in the memory kernel)?]}

%\bibliographystyle{alphaurl}
%\bibliographystyle{abbrvnat}
%\bibliographystyle{apsrev4-1}
%\bibliography{refs}

%

\end{document}